\newtheorem{theorem}{\bf Theorem}
\newtheorem{lemma}{\bf Lemma}
\newtheorem{conjecture}{Conjecture}
\newtheorem{corollary}{\bf Corollary}
\newtheorem{definition}{\bf Definition}
\newtheorem{example}{Example}
\newtheorem{remark}{\bf Remark}
\def\be{\begin{equation}}
\def\ee{\end{equation}}
\def\bes{\begin{equation*}}
\def\ees{\end{equation*}}
\def\beq{\begin{eqnarray}}
\def\eeq{\end{eqnarray}}
\def\beqs{\begin{eqnarray*}}
\def\eeqs{\end{eqnarray*}}
\newcommand{\enc}{\text{\rm{enc}} }
\newcommand{\dec}{\text{\rm{dec}} }
\def\mx{{\mathcal X}}
 \def\clap#1{\hbox to 0pt{\hss#1\hss}}
\newcommand{\BSC}{\text{\rm{BSC}}}
\newcommand{\BEC}{\text{\rm{BEC}}}
\newcommand{\Bern}{\text{\rm{Bern}}}
\newcommand{\E}{\mathbb{E}}
\begin{document}
\title{Simulation of a Channel with Another Channel}
\author[1]{Farzin Haddadpour}
\author[1]{Mohammad Hossein Yassaee}
\author[2]{Salman Beigi}
\author[1,2]{Amin Gohari}
\author[1]{Mohammad Reza Aref}
\affil[1]{Department of Electrical Engineering, Sharif University of Technology, Tehran, Iran}
\affil[2]{School of Mathematics, Institute for Research in Fundamental Sciences (IPM), Tehran, Iran}
\date{}
\maketitle
\begin{abstract}
In this {paper}, we study the problem of simulating a discrete memoryless channel (DMC)  from another DMC under an average-case and an exact model. We present several achievability and infeasibility results, with tight characterizations in special cases. In particular for the exact model, we fully characterize when a binary symmetric channel (BSC) can be simulated from a binary erasure  channel (BEC)  when there is no shared randomness. We also provide infeasibility and achievability results for simulation of a  binary channel  from another binary channel in the case of no shared randomness.  To do this, we use properties of R\'enyi capacity of a given order. We also introduce a notion of ``channel diameter" which is shown to be additive and satisfy a data processing inequality. 
\end{abstract}

\begin{keywords}
Channel Simulation, Coordination, Point to Point channel, Broadcast, BIBO, OSRB.
\end{keywords}

\section{Introduction}\label{sec:Introduction}

Characterizing when a stochastic resource, such as a channel, can simulate another stochastic resource is of theoretical and practical interest. In particular, this general problem relates to the question of how much randomness one can one distill from an imperfect stochastic resource, or how much randomness is required to synthesis a given stochastic resource (e.g. see \cite{channel-sim1, channel-sim2}). In this work, we are primarily interested in channels as stochastic resources, and whether one can use  a channel to simulate another channel. As another example, assume that we have designed an error-correction code for an intended channel. But, it turns out that the actual channel differs from the intended channel. Then, one may wish to augment the wrong channel so that the same error-correction code can be utilized for even the wrong channel.
\color{black}

As a concrete example, consider a scenario in which memoryless copies of a BEC with erasure probability $\epsilon$ from Alice to Bob is available. Alice and Bob aim to use this resource to simulate memoryless copies of a BSC channel with crossover probability $p$. We require that the number of consumed BECs to be equal to the number of generated BSCs. This is clearly possible when $p\in[\frac{\epsilon}{2}, \frac{1}{2}]$, since Bob can degrade the output of the BEC channel by mapping the erasure symbol to $0$ or $1$ with equal probabilities (see Fig.~\ref{fig2news4}). On the other hand, channel simulation is impossible when $p\in \big[0, h^{-1}(\epsilon)\big)$, since the capacity of the consumed erasure channel should be greater than or equal to the capacity of the simulated BSC channel. Then, the question is whether channel simulation is possible when $p\in \big[h^{-1}(\epsilon), \frac{\epsilon}{2}\big)$. Moreover, what would be the answer, if Alice and Bob are additionally provided with shared randomness at a limited rate? The answer to this question depends whether we want to \emph{approximately or exactly} simulate a given channel (the notion of approximate simulation is made precise later). For instance, in the presence of infinite shared random randomness, any two channels of equal capacity can approximately simulate one another \cite{Benet}\cite{CuffISIT}. As a result, in the presence of infinite randomness, approximate BEC to BSC simulation  is possible if and only if $p\in \big[h^{-1}(\epsilon), \frac{1}{2}\big]$ . On the other hand, we show in this work that exact simulation is possible if and only if $p\in \big[\frac{\epsilon}{2}, \frac{1}{2}\big]$ in presence of no shared randomness. Thus, the answers to the above questions depend on the amount of shared randomness and the notion of channel simulation we are considering. 
\color{black}

\begin{figure}
\begin{center}
\begin{tikzpicture}[scale=1.2, thick]
\node at (-2.3,1) (jointdist){$X$};
\node at (0,2)  (commonrdn1){};
\node at (0,-0.2)  (commonrdn2){};
\node  at ( 0,1)  (ch) {};
\node at (3.6,1) (output2){$Y$};
\node at (2.1,1) (output1){$E$};
\node at ( 2,1) (dec) {};
\node at (-2,2) (enc1) {$0$};
\node at (-2,-0.2) (enc2) {$1$};
\node at (2,2) (dec1) {};
\node at (2,-0.2) (dec2) {};
\draw [->] (dec1)--node[above] {$1$}  (3.3,2);
\draw [->] (dec2)--node[above] {$1$}  (3.3,-0.2);
\draw [->]  (commonrdn1) -- node[above] {$1-\epsilon$}  (2,2) ;
\draw [->]  (enc1) -- node[above] {$1$}(commonrdn1);
\draw [->]  (commonrdn2) -- node[below] {$1-\epsilon$}  (2,-0.2) ;
\draw [->] (enc2) --node[below] {$1$} (commonrdn2);
\draw [->] (commonrdn1)--(dec);
\draw [->] (commonrdn2)--(dec);
\draw [->] (output1)--node[below]{$\frac{1}{2}$}(3.3,2);
\draw [->] (output1)--node[above]{$\frac{1}{2}$}(3.3,-0.2);
\end{tikzpicture}
\end{center}
\caption{Simulation of a $\BSC(\epsilon/2)$ from a $\BEC(\epsilon)$ channel.}
\label{fig2news4}
\end{figure}
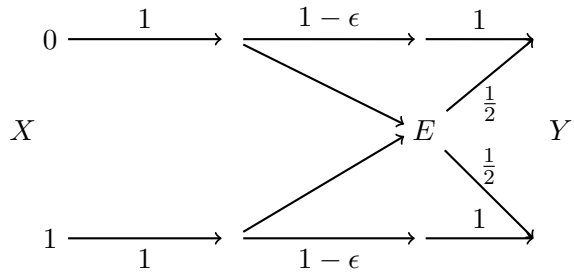

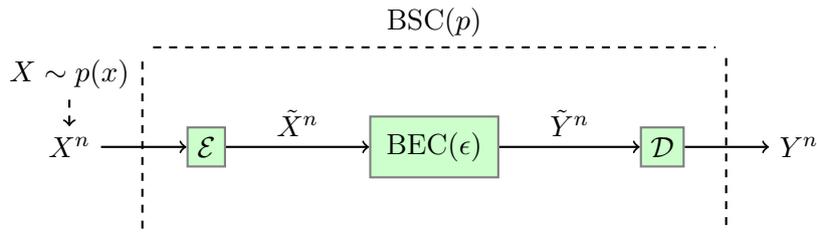
\begin{figure}
\begin{center}
\begin{tikzpicture}[scale=1.2,thick]
\node at (-4,1.8) (dist){$X\sim p(x)$};
\node  at ( 0,1) [rectangle,draw=black!50,fill=green!20!white,inner sep=6pt, minimum size=6mm] (ch) {$\BEC(\epsilon)$};
\node at (-4,1) (input){$X^n$};
\node at (4,1) (output){$Y^n$};
\node at (-3.2,2.1) (nd1){};
\node at (3.2,2.1) (nd2){};
\node at (-3.2,-0.02) (nd3){};
\node at (3.2,-0.02) (nd4){};
\node at ( 2.5,1) [rectangle,draw=black!50,fill=green!20!white] (dec) {$\mathcal D$};
\node at (-2.5,1) [rectangle,draw=black!50,fill=green!20!white](enc) {$\mathcal E$};
\draw [-,dashed](nd1)--node[above]{$\BSC(p)$} (nd2);
\draw [-,dashed](nd2)--(nd4);
\draw [-,dashed](nd4)--(nd3);
\draw [-,dashed](nd3)--(nd1);
\draw [->] (enc) -- node[above] {$\tilde{X}^n$} (ch);
\draw [->] (ch)  -- node[above] {$\tilde{Y}^n$} (dec);
\draw [->] (input)--(enc);
\draw [->] (dec)--(output);
\draw [->,dashed] (dist)--(input);
\end{tikzpicture}
\caption{Channel simulation in absence of common randomness.}
\end{center}
\vskip -0.13cm
\label{fig:cs-no-random}
\end{figure}

The channel simulation problem can be described as follows: Alice has an $x^n$ sequence, and $n$ memoryless copies of
the channel $p(\tilde{y}|\tilde{x})$ as a resource. She creates $\tilde{x}^{{n}}$ as a (stochastic) function of $x^n$, and sends it to Bob over $n$ copies of the channel $p(\tilde{y}|\tilde{x})$. Bob receives $\tilde{y}^{{n}}$ and passes it through another stochastic map to generate $y^n$. Their goal is that the induced channel, which maps $x^n$ to $y^n$,  to be exactly equal, or $\epsilon$-close to $n$ copies of a target memoryless channel $p(y|x)$, for some sufficiently small $\epsilon$. 

To formally define the simulation problem, \color{black}one has to specify a model for the input sequences, $x^n$. Furthermore, one should also specify the simulation error $\epsilon$. Here, we consider two models of \emph{worst-case} and \emph{average-case} for the input sequence, and two models of \emph{asymptotically reliable} ($\epsilon\rightarrow 0$) and \emph{exact} ($\epsilon=0$) for the error. Briefly speaking, 
in the {worst-case model} we demand reliable simulation \emph{for all} possible input sequences $x^n$. In the {average-case model}, on the other hand, we assume that the input $X^n$ is i.i.d.\ according to some distribution $p(x)$.  We use the total variation distance between the simulated channel and the desired channel to measure the performance of the simulation protocol. In an asymptotically reliable simulation, we want the total variation distance to vanish asymptotically, whereas in the exact simulation we want it to be exactly zero. All in all, three models of channel simulation emerges: 
\begin{itemize}
\item average-case  with asymptotically vanishing distortion,
\item worst-case  with  asymptotically vanishing distortion,
\item exact model. 
\end{itemize}
In the exact model, the zero distortion constraint implies exact simulation for all possible input sequences~$x^n$. The average-case model would require approximate channel simulation for most typical sequences~$x^n$. Finally, the worst-case model with asymptotically vanishing distortion requires approximate channel simulation for all sequences~$x^n$. 


Another aspect of the channel simulation problem is the existence or lack thereof of shared randomness. Thus, any of the models can be conceived when limited share randomness exists between the transmitter and the receiver.
We would like to emphasize that one can also conceive other notions of channel simulation, which we do not consider in this work, \emph{e.g.,} average distortion ($\bar{d}$-distance) measure as in~\cite{Neuhoff1,Neuhoff2,channel-sim1}, agreement probability as in \cite{andrej}, \color{black} the  {empirical coordination} of~\cite{coordinationcuff} or the exact simulation of \cite{Kumar}; see also~\cite{exactcoordination, YassaaeeInteractive}.

While the problem of simulating an arbitrary channel from another one has been formally defined in the literature, \emph{e.g.} in \cite{Winter1}, but only the following special cases are studied in the literature:

\vspace{.12in}
\noindent\textbf{Simulation of a noiseless link:} Suppose that we would like to simulate a noiseless link from a given channel $p(\tilde y| \tilde x)$. In both the average-case and worst-case asymptotic models, a noiseless link can be simulated with $p(\tilde y| \tilde x)$ only if its rate is less than the Shannon capacity of $p(\tilde y| \tilde x)$. Indeed, channel simulation in this special case, corresponds to 
the problems of computing the average probability of error capacity, and maximum probability of error capacity of $p(\tilde y| \tilde x)$ which are known to be equal. 
Similarly, for exact simulation, the rate of the noiseless link should be less than the zero-error capacity of $p(\tilde y| \tilde x)$.  Shared randomness does not affect the zero-error and average error capacity of a channel, hence is useless for simulation of a noiseless link. 

\vspace{.12in}
\noindent\textbf{Simulation with a noiseless link:} Conversely, suppose that we would like to simulate a channel $p(y|x)$ from a noiseless link as well as limited shared randomness. The reverse Shannon theorem shows the ability of a noiseless link to simulate a noisy channel of equal
capacity  in the presence of infinite shared randomness for the average-case asymptotic model \cite{Benet}. 
In other words, a noiseless link whose rate is equal to the capacity of the channel $p(y|x)$, is sufficient for simulating the channel $p(y|x)$. Therefore, all channels with the same capacity are equivalent in the presence of infinite shared randomness for the average-case asymptotic model.  However, the assumption of infinite shared randomness is crucial here. 
This is demonstrated by the result of Cuff in \cite{CuffISIT}. \\ \noindent
In the exact model, we saw above that simulation of a noiseless link from a given channel reduces to the zero-error capacity of a channel, and the amount of shared randomness is not relevant. Unfortunately computing the zero-error capacity of a channel is an open problem. Fortunately, the reverse problem has a clean solution when infinite shared randomness is available, given in \cite{Winter1}. While the  authors  in \cite{Winter1} do not explicitly express their solution in terms of the  R\'enyi mutual information, one can observe that their answer is the 
 R\'enyi capacity of order infinity of the channel (see \cite{exactcoordination} for a discussion).
\color{black}

\begin{figure}
\begin{center}
\begin{tikzpicture}[scale=1.4, thick]
\node at (-3,2) (dist){};
\node at (0,1.8)  (commonrdn){$\omega$};
\node  at ( 0,1) [rectangle,draw=black!50,fill=green!20!white,inner sep=6pt, minimum size=6mm] (ch) {$p(\tilde{y}|\tilde{x})$};
\node at (-3,1) (input){$x^n$};
\node at (3,1) (output){$y^n$};
\node at ( 2,1) [rectangle,draw=black!50,fill=green!20!white] (dec) {$\mathcal{D}$};
\node at (-2,1) [rectangle,draw=black!50,fill=green!20!white](enc) {$\mathcal{E}$};
\draw [->] (enc) -- node[above] {$\tilde{x}^{n}$} (ch);
\draw [->] (ch)  -- node[above] {$\tilde{y}^{n}$} (dec);
\draw [->,dashed]  (commonrdn) to  (2,1.8) to (dec);
\draw [->,dashed]  (commonrdn) to  (-2,1.8) to (enc);
\draw [->] (input)--(enc);
\draw [->] (dec)--(output);
\end{tikzpicture}
\label{fig1general}
\end{center}

\begin{center}
\begin{tikzpicture}[scale=1.4, thick]
\node at (-3,1.8) (dist){$X\sim p(x)$};
\node at (0,1.8)  (commonrdn){$\omega$};
\node  at ( 0,1) [rectangle,draw=black!50,fill=green!20!white,inner sep=6pt, minimum size=6mm] (ch) {$p(\tilde{y}|\tilde{x})$};
\node at (-3,1) (input){$X^n$};
\node at (3,1) (output){$Y^n$};
\node at ( 2,1) [rectangle,draw=black!50,fill=green!20!white] (dec) {$\mathcal{D}$};
\node at (-2,1) [rectangle,draw=black!50,fill=green!20!white](enc) {$\mathcal{E}$};
\draw [->] (enc) -- node[above] {$\tilde{X}^{n}$} (ch);
\draw [->] (ch)  -- node[above] {$\tilde{Y}^{n}$} (dec);
\draw [->,dashed]  (commonrdn) to  (2,1.8) to (dec);
\draw [->,dashed]  (commonrdn) to  (-2,1.8) to (enc);
\draw [->] (input)--(enc);
\draw [->] (dec)--(output);
\draw [->,dashed] (dist)--(input);
\end{tikzpicture}
\caption{Simulation of $p(y|x)$ using $p(\tilde{y}|\tilde{x})$ using shared randomness. (Top figure) the general channel simulation problem for a point-to-point channel (Bottom figure) In the average-case model, the input to the channel, i.e., $X$ is assumed to be i.i.d.}
\label{fig1}
\end{center}
\end{figure}
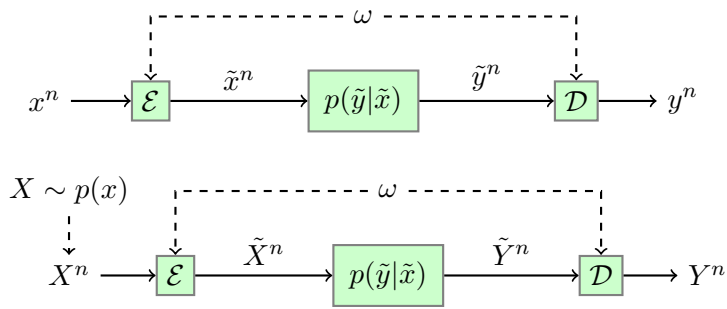

\subsection{Main results}
In this paper, we only study the two models of exact and asymptotic average-case. We also study the problem of simulating a broadcast channel with another broadcast channel. 
The main contributions of this paper are as follows.

\vspace{.12in}
\noindent\textbf{Asymptotic average-case model:} 
The channel simulation problem in the asymptotic average-case model is essentially a joint source-channel coding problem as we compress and code $X^n$ into the channel inputs $\tilde{X}^n$. From a mathematical perspective, the problem is complicated due to the following fact: consider the joint pmf on random variables $X^n, \tilde{X}^{{n}}, \tilde{Y}^{{n}}, Y^n$ induced by a simulation code of length $n$. In the case of no-shared randomness, we have the Markov chain $X^n\rightarrow \tilde{X}^{{n}}\rightarrow \tilde{Y}^{{n}}\rightarrow Y^n$. However, $X^n, \tilde{X}^{{n}}, \tilde{Y}^{{n}}, Y^n$ do not necessarily have a joint i.i.d. distribution (even in the average-case model we only require $X^n, Y^n$ to be almost i.i.d.).

For this model, we provide an inner bound for the point-to-point channel simulation problem using the OSRB technique~\cite{OSRB}. This inner bound is based on a ``hybrid coding" scheme to perform joint source-channel coding. We also provide an outer bound. Then, we compare these bounds for simulation of a $\BSC$ channel from a $\BEC$ channel in the absence of shared common randomness. We show that in this example, the degradation strategy is sub-optimal for any non-uniform input distribution.

\vspace{.12in}
\noindent\textbf{Exact model:}  Our main result here are two infeasibility results in the absence of shared common randomness. Our first result is based on the observation that if channel $p(y|x)$ can be exactly simulated using $p(\tilde{y}|\tilde{x})$, then the channel $p(\tilde{y}|\tilde{x})$ must have a better \emph{error exponent} than $p(y|x)$ for all communication rates. Error exponents are known to have a characterization in terms of  R\'enyi capacity. Our first infeasibility result is also in terms of  R\'enyi capacity. Our second infeasibility result is in terms a quantity that we introduce and name the ``channel diameter".  Just like the  R\'enyi capacity of a channel, this quantity is  shown to be additive and satisfy a data processing inequality. We use these two infeasibility results to 
\color{black}
show that the  degradation strategy is optimal for simulation of a $\BSC$ channel from a $\BEC$ channel in the absence of shared common randomness. We also study when we can simulate a binary input binary output channel (BIBO) from another BIBO channel.

\vspace{.12in}
\noindent\textbf{Broadcast channel simulation:}  We consider an extension of the point-to-point channel simulation problem under the asymptotic average-case model. We apply the OSRB technique to the broadcast channel simulation problem and present an inner bound for it. Simulation of a broadcast network was stated as an open problem in~\cite[p.100]{CuffThesis}.

\vspace{.12in}
The rest of the paper is organized as follows: In Section \ref{Notation} we set up the notation. In Section \ref{two-models}, two models for channel simulation are introduced. In Sections \ref{p-p-i} and \ref{p-p-o} we provide our results for a point-to-point channel. Section \ref{sec:StrongBroadcast} contains our results for the broadcast channel. In Section \ref{Zsec} our result for exact channel simulation is presented. The proofs are included in Section \ref{sec:proofs}.


\section{Notation and preliminaries}\label{Notation}
Throughout this paper, we restrict ourselves to discrete random variables, taking values in finite sets. We generally use $q(\cdot)$ to denote the pmf's induced by a code, and $p(\cdot)$ to denote the desired i.i.d.~pmf's. We also use $\omega$ as a random variable to denote shared randomness. To distinguish the uniform distribution we use the notation $p^{\mathsf u}$. The total variation distance between two pmf's $p$ and $q$ on the same alphabet $\mathcal{X}$, is denoted by $\|p\text{\textminus}q\|_{1}$. We say that $p\overset{\epsilon}{\approx}q$ if $\|p\text{\textminus}q\|_{1}\leq \epsilon$.  A sequence of random variables $X_1, X_2, \cdots, X_j$ is denoted by $X_{[1:j]}$ where $[1:j]=\{1,2,\cdots, j\}$, and $\bar{\alpha}=1-\alpha$. We use $H(X)$ to denote the entropy of a random variable $X$ and $h(p)$ to denote the binary entropy of a Bernoulli random variable with parameter~$p$. All the logarithms in this paper are in base~2.

We frequently use the concept of random pmf's, which we denote by capital
letters (e.g., $P_X$). To be more precise, let $\Delta^{\mathcal{X}}$ be the probability simplex over the set $\mathcal{X}$. A random pmf
$P_X$ is indeed a probability distribution over $\Delta^{\mathcal{X}}$. In other words, if we use $\Omega$ to denote some sample space, we may have a random variable $P_{X}(x;\omega)$ such that 
for any $\omega \in \Omega$ and $x\in \mathcal{X}$ we have $P_{X}(x;\omega)\geq 0$, and $\sum_{x}P_{X}(x;\omega)=1$ for all $\omega$. Then, $P_{X}(.;\omega)$ is a vector of random variables which resembles a random distribution. We denote such a random pmf by $P_X$. 

We can definite random $P_{X,Y}$ on a product set $\mathcal{X}\times \mathcal{Y}$ similarly. We note that we can continue to use the law of total probability with random pmf's
(e.g., to write $P_{X}(x)=\sum_{y}P_{X,Y}(x,y)$ meaning that $P_{X}(x;\omega)=\sum_{y}P_{X,Y}(x,y;\omega)$ for all $\omega$) and conditional probability pmf's (e.g., to write $P_{Y|X}(y|x)=\frac{P_{XY}(x,y)}{P_{X}(x)}$ meaning that $P_{Y|X;\omega}(y|x)=\frac{P_{XY}(x,y;\omega)}{P_{X}(x;\omega)}$ for all $\omega$).

\subsection{R\'enyi divergence}

\begin{definition} R\'enyi divergence of order $\alpha$, or the $\alpha$-R\'enyi divergence, for $\alpha\in (0,1)\cup (1,\infty)$, between two pmf's is defined as
$$D_\alpha(p\| q) := \frac{1}{\alpha -1} \log \big(\sum_x p_x^{\alpha}q_x^{1-\alpha}\big).$$ 
We also define
\begin{align}
H_{\alpha}(p) := - D_{\alpha}(p\| I) = \frac{1}{1-\alpha}\log\big(\sum_x p_x^{\alpha}\big).\nonumber
\end{align}
\end{definition}
Note that $\underset{\alpha\rightarrow1}{\lim}D_{\alpha}(p\|q)=D(p\|q)$ as well as $\underset{\alpha\rightarrow1}{\lim}{H_{\alpha}(p)}=H(p)$ which are KL divergence and Shannon entropy, respectively. 

It is known that R\'enyi divergence satisfies the data processing inequality similar to KL divergence (see e.g.,~\cite{Ervin}). 

There are several definitions for  mutual information  of order $\alpha$. Here we use Sibson's choice~\cite{Sibson} as follows: 
\begin{align*}
I_{\alpha}(X;Y)&:=\min_{q_Y} D_\alpha( p_{XY}\| p_X\cdot q_Y)\\
&=\frac{\alpha}{\alpha-1} \log\left(\sum_y \left[\sum_x p(x)p(y|x)^{\alpha}\right]^{1/\alpha}\right).
\end{align*}
The $\alpha$-capacity of a channel $p(y|x)$ is then defined as (see~\cite{Verdu} for a review):
\begin{align}
C_{\alpha}\big(p(y|x)\big)&=\max_{p(x)} I_\alpha(X; Y)\nonumber\\
&=\max_{p(x)}\frac{\alpha}{\alpha-1} \log\left(\sum_y \left[\sum_x p(x)p(y|x)^{\alpha}\right]^{1/\alpha}\right).
\label{eq:alpha-capacity}
\end{align}
In particular, the capacity of order infinity ($\infty$-capacity) is equal to
\begin{align}
C_{\infty}(p(y|x))=\log\left(\sum_{y}\underset{x}{\max}\;p(y|x)\right)\label{RenyMut}.
\end{align}
For $\alpha=1$, we let $C_1(p(y|x))=C(p(y|x))$ to be the Shannon's capacity of the channel.

Interestingly, just like the Shannon capacity, the $\alpha$-capacity is also additive for product channels.
\begin{theorem}[\cite{Arimoto}]\label{thm:prodal} For $\alpha>0$ and product of identical channels $p(y^n|x^n)=\prod_{i=1}^np(y_i|x_i)$ we have 
$$C_{\alpha}\big(p(y^n|x^n)\big)=n C_{\alpha}\big(p(y|x)\big).$$
\end{theorem}
Using the data processing property for R\'enyi divergence, one can easily show the data processing property for mutual information of order $\alpha$:
\begin{theorem}[\cite{Polyanskiy}] If $X-Y-Z-W$ and $\alpha>0$ we have
$$I_\alpha(Y;Z)\leq I_\alpha(X;W).$$
\label{thm:dataal}
\end{theorem}
\begin{lemma}\label{lemma:rev-added}
$C_{\alpha}\big(p(y|x)\big)$ is quasi-convex in $p(y|x)$ for any $\alpha>0$.
\end{lemma}
\begin{proof}
The mapping $p(y|x)\mapsto \zeta_\alpha(I_\alpha(X; Y))$ for any fixed $p(x)$ is  convex, where 
$$\zeta_\alpha=\frac{1}{\alpha-1}\exp(t-\frac{t}{\alpha})$$
is a monotonically increasing function \cite[Theorem 4]{Verdu}. Since the inverse of $\zeta_\alpha$ is also a  monotonically increasing function, and the composition of any convex function with an increasing function is quasi-convex, we conclude that the mapping $p(y|x)\mapsto I_\alpha(X; Y)$  is  quasi-convex for any fixed $p(x)$. This implies that the mapping $p(y|x)\mapsto C_\alpha(p(y|x))$ is a maximum of quasi-convex functions, and hence quasi-convex itself.
\end{proof}

\subsubsection{Optimal error exponent}
 R\'enyi mutual information finds an operational meaning in connection to the optimal error exponent of a channel. Given a channel $p(y|x)$ and a fixed transmission rate $R$, let $P_e^{(n)}$ denote the error probability of the best code with blocklength $n$. Then, the optimum error exponent $E_{p(y|x)}(R)$, at some fixed transmission rate $R$, is the supremum of $-\frac{1}{n}\ln P_e^{(n)}$ as $n$ tends to infinity. The optimal error exponent $E(R)$ is also known as the channel reliability function. The channel reliability function is the same under average and maximal notions of error probability; this can be proved using the standard argument of pruning a code with low average probability of error to construct a code with low maximal probability of error. It is known that $E_{p(y|x)}(R)$ is related to capacity of order $\alpha$:

\begin{theorem}\cite{Gallager1, Gallager2} For any rate $R$ less than the channel capacity $C$, we have that 
$$ \max_{\alpha\in [\frac 12 ,1]}(\alpha^{-1}-1)\left(C_{\alpha}(p(y|x))-R\right) \leq E_{p(y|x)}(R)\leq \max_{\alpha\in (0 ,1]}(\alpha^{-1}-1)\left(C_{\alpha}(p(y|x))-R\right).$$
\end{theorem}
\color{black}


\subsection{Two models for channel simulation}\label{two-models}
Consider the problem of simulating memoryless copies of the channel $p(y|x)$ given memoryless copies of $p(\tilde{y}|\tilde{x})$ as depicted in Fig.~\ref{fig1}. Alice's input is denoted by~$X^n$. The shared randomness between Alice and Bob is denoted by the random variable $\omega$, which is independent of $X^{n}$ and uniformly distributed over $[1:2^{nR}]$. A simulation code consists of
\begin{itemize}
\item A (stochastic) encoder with conditional pmf $q^{\enc}(\tilde{x}^{n}|x^{n},\omega)$.
\item A (stochastic) decoder with conditional pmf $q^{\dec}(y^{n}|\tilde{y}^{n},\omega)$ .
\end{itemize}
Thus, the joint distribution induced by the code is as follows:
\begin{align}\label{eq:induced-dist-2}
q( \tilde{x}^{n}, \tilde{y}^{n}, y^n|\omega,x^n)=q^{\enc}(\tilde{x}^{n}|x^{n},\omega)\Big(\prod_{i=1}^{n}p(\tilde{y}_i|\tilde{x}_i)\Big)q^{\dec}(y^{n}|\tilde{y}^{n},\omega).
\end{align}

 An $(n, R)$ \emph{exact} simulation code requires that for every $x^n$ and $y^n$
\begin{align}
q(y^n|x^n)=\prod_{i=1}^np(y_i|x_i),
\end{align}
where $q(y^n|x^n)$ is the pmf induced by the code.

On the other hand, an \emph{average-case} simulation code assumes that Alice observes i.i.d.\ copies of a source $X$ (taking values in the finite set $\mx$ and having pmf $p(x)$). Then an $(n,R,\epsilon)$ average-case simulation code would require that
$$
\Big\| q(x^{n},y^{n})-\overset{n}{\underset{i=1}{\prod}}p(x_{i})p(y_{i}|x_i)\Big\|_{1}\leq \epsilon.
$$

\begin{definition}
The channel $p(y|x)$  is said to be in the admissible region of the channel-rate pair $\big(p(\tilde{y}|\tilde{x}), R\big)$ for the exact simulation model if one can find an $(n,R)$ exact simulation code for some $n$. \\ \noindent
The input distribution-channel pair $\big(p(x), p(y|x)\big)$ is said to be in the admissible region of the channel-rate pair $\big(p(\tilde{y}|\tilde{x}), R\big)$ for the asymptotic average-case model if one can find a sequence of $(n,R,\epsilon_n)$ average-case simulation codes such that
$
\underset{n\rightarrow \infty}{\lim} \epsilon_n=0
$.
\end{definition}

Clearly, exact channel simulation implies average case simulation for any arbitrary $p(x)$.


\section{Average-case model}

In this section we state our results on the channel simulation problem in the average-case model. For the proofs of these results refer to  Section~\ref{sec:proofs}.

\subsection{Point-to-point channel: inner bound}\label{p-p-i}
\begin{theorem}\label{Pt2PtIn} \emph{(Inner Bound)} A pair $(p(x),p(y|x))$ is in the admissible region of the channel-rate pair $\big(p(\tilde{y}|\tilde{x}), R\big)$ if
one can find channels $p(u, \tilde{x}|x)$ and $p(y|\tilde{y},u)$ such that 
\begin{align}\label{eq:ppi-factor}
p(u,\tilde{x},\tilde{y},x,y)=p(x)p(u,\tilde{x}|x)p(\tilde{y}|\tilde{x})p(y|\tilde{y},u),
\end{align}
with the given marginal $p(x,y)=p(x)p(y|x)$, and that
{{\begin{align}
R+I(U;\tilde{Y})&>I(U;XY),\nonumber\\
I(U;\tilde{Y})&>I(U;X).\label{thm1region}
\end{align}}}
\end{theorem}

The main idea of the above theorem is to use an encoding scheme similar to hybrid codes, by utilizing an auxiliary random variable to allow a coupling between the channels, as opposed to simply reducing the resource channel to a noiseless link according to its channel capacity (see \cite{CuffS}
for another uses of hybrid coding).
\color{black}

Let us examine the following (non-optimal) general strategy for channel simulation and compare it to the above theorem. 
We can always convert the channel $p(\tilde{y}|\tilde{x})$ to a noiseless link of rate $\tilde{C}=\max_{p(\tilde{x})}I(\tilde{X};\tilde{Y})$ and then use the achievability part of the result of \cite{CuffISIT} (described in the introduction) to simulate the channel $p(y|x)$ from the noiseless link. This strategy is a special case of the above theorem. Let $p(\tilde x)$ denote the pmf that maximizes $I(\tilde{X};\tilde{Y})$. Also let $U$ in the above theorem to be of the form $U=(\tilde{X}, U')$ where $(U', X, Y)$ is independent of $\tilde{X}$ and satisfies $X-U'-Y$.  Then~\eqref{eq:ppi-factor} holds and~\eqref{thm1region} reduces to
\begin{align}
R+\tilde{C}&>I(U';XY),\nonumber\\
\tilde{C}&>I(U';X).
\label{eq:cuff-bound-4}
\end{align}
This region has appeared in~\cite{CuffISIT}.

In the special case of $\tilde{Y}=f(\tilde{X})$ where $\tilde Y$ is a function of $\tilde X$, our inner bound matches~\eqref{eq:cuff-bound-4}, which is expected due to the converse given in~\cite{CuffISIT}. To see this, take an arbitrary $p(u,\tilde{x},\tilde{y},x,y)$ satisfying~\eqref{eq:ppi-factor}. In this case we have $X-U\tilde{Y}-Y$, i.e., we have $X-U'-Y$ for $U'=(U,\tilde{Y})$. As a result,  
\begin{align*}I(U';XY)&=I(U;XY)+I(\tilde{Y};XY|U)\\
&\leq I(U;XY)+H(\tilde{Y}|U)\\&= I(U;XY)-I(U;\tilde{Y})+H(\tilde{Y})\\&<R+I(U;\tilde{Y})-I(U;\tilde{Y})+H(\tilde{Y}).
\end{align*}
Thus, $R+H(\tilde{Y})=R+I(\tilde{X};\tilde{Y})>I(U';XY)$, and hence $R+\tilde{C}>I(U';XY)$. To verify the second inequality, observe that
\begin{align*}I(U';X)&=I(U;X)+I(\tilde{Y};X|U)\\&\leq I(U;X)+H(\tilde{Y}|U)\\&=I(U;X)-I(U;\tilde{Y})+H(\tilde{Y})
\\&<I(U;\tilde{Y})-I(U;\tilde{Y})+H(\tilde{Y})\\&=H(\tilde{Y}).
\end{align*}
This means that $\tilde{C}\geq I(\tilde{X};\tilde{Y})=H(\tilde{Y})>I(U';X)$.

While the cardinality of $U$ in the above theorem can be bounded from above by $|\mathcal{X}||\mathcal{Y}||\mathcal{\tilde{X}}||\mathcal{\tilde{Y}}|$, it is not easy to explicitly compute the inner bound for a given arbitrary pair of channels. In particular, it is not easy to compute the achievable region for the BEC-BSC example that we discussed in the introduction. Nonetheless, we show that the above inner bound gives a strictly better strategy than the degradation scheme, for any non-uniform input pmf in the BEC to BSC simulation problem.\color{black}

\begin{theorem}\label{Subth}
Degradation for simulating a $\BSC(p)$ from a $\BEC(\epsilon)$ in the average-case model with non-uniform input pmf is suboptimal. In other words,  given any $\epsilon\in (0,1)$ and any non-uniform binary input pmf $p(x)$, there exists $p<{\epsilon}/{2}$ such that $(p(x), \BSC(p))$ is in the admissible region of $(\BEC(\epsilon), 0)$.
\end{theorem}


\subsection{Point-to-point channel: outer bound}\label{p-p-o}


\begin{theorem}[Outer Bound] \label{converseThm} If the input distribution-channel pair $(p(x), p(y|x))$ is in the admissible region of the channel-rate pair $\big(p(\tilde{y}|\tilde{x}), R\big)$, then for any non-negative reals $\beta$, $\gamma$, and $\theta$ we have
\begin{align}
I(X;Y)+&\min_{U: X-U-Y}\big[\beta I(U;XY)+\gamma I(U;X)+\theta I(U;Y)\big]\leq \nonumber \\
&\max_{p(\tilde{x})}\bigg[I(\tilde{X};\tilde{Y})+\min_{\tilde{U}: \tilde{X}-\tilde{U}-\tilde{Y}}\big[\beta I(\tilde{U};\tilde{X}\tilde{Y})+\gamma I(\tilde{U};\tilde{X})+\theta I(\tilde{U};\tilde{Y})\big]\bigg]
+(\beta+\gamma+2\theta)R.\color{black}
\label{eqn:tm2}
\end{align}
Further, to compute the above minimums one can put the cardinality bounds of $|\mathcal{U}|\leq |\mathcal{X}|\cdot |\mathcal{Y}|$ and $|\tilde{\mathcal{U}}|\leq |\tilde{\mathcal{X}}|\cdot |\tilde{\mathcal{Y}}|$.
\end{theorem}

According to~\cite{Benet, CuffISIT}, when  infinite shared randomness is available, $\tilde C=C(p(\tilde y|\tilde x))\geq I(X;Y)$ is a sufficient and necessary condition for channel simulation. By setting $\beta=\gamma=\theta=0$ in the above theorem, we recover the necessity of this condition. \color{black}

\begin{corollary} $(p(x), p(y|x))$ is in the admissible region of $\big(p(\tilde{y}|\tilde{x}), \infty\big)$ if and only if 
$\tilde C=C(p(\tilde y|\tilde x))\geq I(X;Y)$.
\end{corollary}

A channel $p(y|x)$ is called symmetric if for every permutation $\pi_X$ on $\mathcal{X}$, there is a permutation $\pi_Y$ on $\mathcal{Y}$ such that $p(\pi_Y(y)|\pi_X(x))=p(y|x)$. For such channels, the outer bound of the above theorem can be simplified as the maximum on the right hand side occurs at uniform distribution $p^{\mathsf u}(\tilde x)$.

\begin{theorem}[Outer Bound for Symmetric Channels] \label{converseThm2}
Suppose that the channel $p(\tilde{y}|\tilde{x})$ is symmetric. If $(p(x), p(y|x))$ is in the admissible region of $\big(p(\tilde{y}|\tilde{x}), 0\big)$, then $I_{p^{\mathsf u}(\tilde x)}(\tilde{X};\tilde{Y})\geq I(X;Y)$  and $\mathcal{S}(p(\tilde y|\tilde x), p^{\mathsf u}(\tilde x))\subseteq \mathcal{S}(p(y|x), p(x))$ where
$$\mathcal{S}(p(y|x), p(x))\triangleq\bigcup_{X-U-Y}\big\{(a_1, a_2, a_3): a_1\geq I(U;XY), a_2\geq I(U;X), a_3\geq I(U;Y)\big\}.$$
\end{theorem}

\begin{remark}
\emph{ The minimum value of $a_1$ such that $(a_1, \infty, \infty)\in\mathcal{S}(p(y|x), p(x))$ is Wyner's common information between $X$ and $Y$, denoted by $C(X,Y)$.} 
\end{remark}

 As an application of the above theorem, we show that for any $0<p<1$ and $\epsilon<1/2$, it is impossible to simulate a $\BEC(\epsilon)$ with uniform input distribution from a $\BSC(p)$ when there is no shared randomness. In other words, we show that $(p^{\mathsf u}, \BEC(\epsilon))$ is not admissible for $(\BSC(p), 0)$. To prove this, observe that
the simple mutual information bound is inadequate as it gives $1-\epsilon<1-h(p)$ which does not exclude the possibility of simulation for the entire range of $0<p<1$. Nevertheless, if simulation is possible, from the above outer bound for symmetric channels, we conclude that the Wyner common information of $\BEC(\epsilon)$ with uniform input distribution must be less than or equal to the maximum of the Wyner common information of $\BSC(p)$ over all input distributions. For $\epsilon<1/2$, the Wyner common information of a $\BEC(\epsilon)$ with uniform input is one~\cite[IV.\ A]{CuffISIT}. On the other hand, for any input distribution $p(x)$ the Wyner common information $C(X,Y)$ is strictly less than one, since when  $p\in(0,1)$, one can write the $\BSC(p)$ channel as the cascade of two \emph{non-trivial} $\BSC$ channels $X\rightarrow U\rightarrow Y$. Then $I(U;XY)$ will be strictly less than $H(U)$ which is less than one. This concludes our claim. Since the average-case model is less restrictive than the exact model, we can conclude that simulating a $\BEC(\epsilon)$ from a $\BSC(p)$ is impossible under the exact model as well when $0<p<1$ and $\epsilon<1/2$. 

In the following theorem we characterize the set $\mathcal{S}(p(y|x), p(x))$ for $\BSC$ and $\BEC$ channels with uniform input distributions.

\begin{theorem} 
\begin{itemize}
\item[\rm{(i)}] $\mathcal{S}(\BEC(\epsilon), p^{\mathsf u})$ is the set of triples $(a_1, a_2, a_2)$ such that for some $a\in[1-\epsilon, 1]$ we have 
\begin{align*} 
a_1&\geq -\epsilon\log(\epsilon)+a-a\log(a)+(a-1+\epsilon)\log({a-1+\epsilon}),
\\
a_2&\geq a,
\\
a_3&\geq 1-\epsilon-\epsilon\log(\epsilon)-a\log(a)+(a-1+\epsilon)\log({a-1+\epsilon}).
\end{align*}

\item[\rm{(ii)}] Assuming {Conjecture \ref{conjecture1}} (stated in Section~\ref{sec:proofs}), the set $\mathcal{S}(\BSC(p), p^{\mathsf u})$  consists of triples $(a_1, a_2, a_3)$ such that for some $0\leq \alpha, \beta\leq 1$ with $ p = \alpha \bar \beta+ \bar \alpha  \beta$ we have 
\begin{align*}
 a_1& \geq 1+h(p)-h(\alpha)-h(\beta),
\\a_2&\geq 1-h(\alpha),
\\a_3&\geq 1-h(\beta).
\end{align*}
\end{itemize}
\label{thm3}
\end{theorem}
\begin{example}
Let us use the above theorem as well as Theorem~\ref{converseThm2} to study the problem of simulating a BSC channel from a BEC channel. More precisely, we want to characterize the range of $p\in[0,1/2]$ such that $(p^{\mathsf u},\BSC(p))$ is in the admissible region of $(\BEC(\epsilon), 0)$. The degradation scheme (of Fig.~\ref{fig2news4}) for this problem gives the achievability of $p=\epsilon/2$.
On the other hand, if simulation is possible, the trivial mutual information bound implies that $1-\epsilon\geq 1-h(p)$, which does not match the achievability bound. Nevertheless, the above results give a stronger converse. 

If simulation is possible, then for any $a\geq 1-\epsilon$ there exist $\alpha$ and $\beta$ such that $ p = \alpha \bar \beta+ \bar \alpha  \beta$ and
\begin{align}
-\epsilon\log(\epsilon)+a-a\log(a)+(a-1+\epsilon)\log({a-1+\epsilon})&\geq 1+h(\alpha\bar{\beta}+\bar{\alpha}\beta)-h(\alpha)-h(\beta),\label{eqn:anewsb1}\\
a&\geq 1-h(\alpha),\label{eqn:anewsb2}\\
1-\epsilon-\epsilon\log(\epsilon)-a\log(a)+(a-1+\epsilon)\log({a-1+\epsilon})&\geq 1-h(\beta).\label{eqn:anewsb3}
\end{align}
In other words, simulation is possible only if $p\geq p^*$ with
$$p^*=\underset{a\geq 1-\epsilon}{\max}\:\underset{\alpha,\beta}{\min} \:{(\alpha\bar{\beta}+\bar{\alpha}\beta)},$$
where the minimum is over $\alpha,\beta \in[0,1]$ satisfying \eqref{eqn:anewsb1}-\eqref{eqn:anewsb3}. Observe that without loss of generality we can restrict to $\alpha,\beta\in [0,1/2]$.

\begin{figure}[t]
\begin{center}
\includegraphics[scale=0.4]{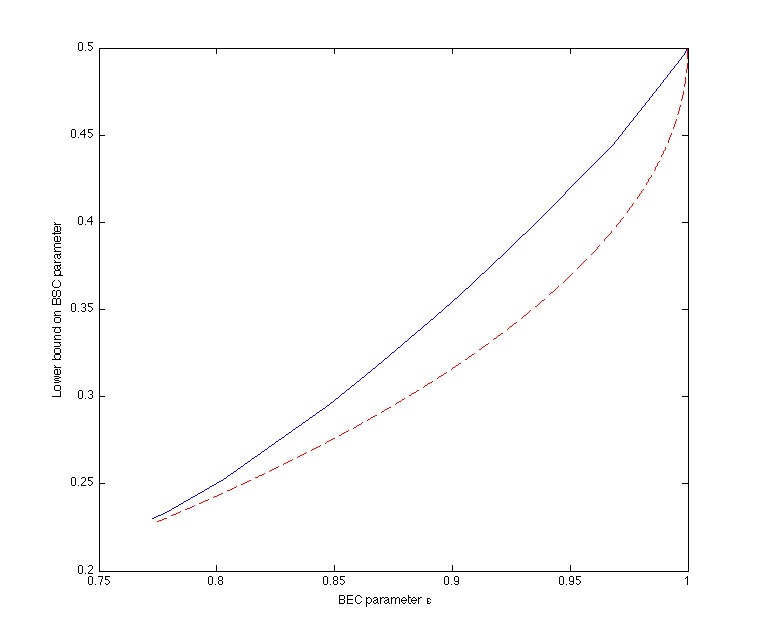}
\end{center}
\caption{The value of $p^*$ given in \eqref{eqnForFig} (upper curve), and $p=h^{-1}(\epsilon)\in[0,1/2]$ (lower curve) in terms of the BEC parameter $\epsilon$, for $\epsilon> \epsilon^*\approx 0.7729$.\color{black}}\label{FigNewRev}
\end{figure}

We show that the above converse is stronger than $1-\epsilon\geq 1-h(p)$ for all values of $\epsilon> \epsilon^*\approx 0.7729$, where $\epsilon^*\in (0,1]$ is the unique positive solution of $\epsilon^*=h(\epsilon^*)$. To prove this, let us choose $a=1-\epsilon$ and consider the following lower bound on $p^*$:
\begin{align}p^*\geq \underset{\alpha,\beta}{\min} \:{(\alpha\bar{\beta}+\bar{\alpha}\beta)},\label{eqnForFig}\end{align}
over pairs $(\alpha,\beta)$ satisfying \eqref{eqn:anewsb1}-\eqref{eqn:anewsb2} with $a=1-\epsilon$.  We have plotted $p^*$ in terms of $\epsilon$ in Fig.~\ref{FigNewRev}, and compared it with $p=h^{-1}(\epsilon)\in[0,1/2]$ that one gets from the trivial upper bound. However, to show the improvement analytically, \color{black} note that equations \eqref{eqn:anewsb2} and \eqref{eqn:anewsb3} imply that
$$h(\alpha)\geq \epsilon,
\qquad h(\beta)\geq \epsilon-h(\epsilon).$$
Since $\epsilon>\epsilon^*$,  we have $ \epsilon-h(\epsilon)>0$ and hence $h(\beta)>0$. Thus, $\beta\in[\beta^*, 1/2]$ where $\beta^*>0$ is the solution of
$h(\beta^*)=\epsilon-h(\epsilon)$. Therefore, $\alpha\bar{\beta}+\bar{\alpha}\beta\geq \alpha\bar{\beta}^*+\bar{\alpha}\beta^*$. We next have $h(\alpha)\geq \epsilon$, and thus, $\alpha\in [\alpha^*, 1/2]$ where $\alpha^*\in[0,1/2]$ is such that $h(\alpha^*)=\epsilon$. Hence, after solving the minimization over all $(\alpha,\beta)$, one ends up with 
$$p^*\geq \underset{\alpha,\beta}{\min} \:{(\alpha\bar{\beta}+\bar{\alpha}\beta)}=\alpha^*\bar{\beta}^*+\bar{\alpha^*}\beta^*.$$
Since for any $\alpha,\beta\in (0,1/2]$ we have $h(\alpha\bar{\beta}+\bar{\alpha}\beta)>h(\alpha)$, we find that $h(p^*)>\epsilon$, which results in a strictly better bound than $h(p)\geq \epsilon$ that one gets from the trivial upper bound. 
\end{example}

\subsection{Broadcast channel simulation}\label{sec:StrongBroadcast}

Consider the problem of simulating memoryless copies of the channel $p(y,z|x)$ given memoryless copies of $p(\tilde{y},\tilde{z}|\tilde{x})$ as depicted in Fig.~\ref{fig3}. In this setting, the input terminal observes i.i.d.\ copies of a source $X$ (taking values in finite sets $\mx$ and having joint pmf $p(x)$). The three terminals are provided with a shared randomness at rate $R$, denoted by random variable $\omega$, which is uniformly distributed over $[1:2^{nR}]$ and is independent of $X^n$.  An $(n, R)$ code consists of
\begin{itemize}
\item An (stochastic) encoder with conditional pmf's $q^{\enc}(\tilde{x}^{n}|x^{n}, \omega)$,
\item Two (stochastic) decoders with conditional pmf's $q^{\dec_1}(y^{n}|\tilde{y}^{n}, \omega)$ and $q^{\dec_1}(z^{n}|\tilde{z}^{n}, \omega)$.
\end{itemize}
Thus, the joint distribution induced by the code is as follows:
\begin{align}q(\hat{x}^n, \hat{y}^n, \hat{z}^n, y^n, z^n| \omega, x^n)=q^{\enc}(\tilde{x}^{n}|x^{n},\omega)(\prod_{i=1}^{n}p(\tilde{y}_i, \tilde{z}_i|\tilde{x}_i)\big)q^{\dec_1}(y^{n}|\tilde{y}^{n}, \omega)q^{\dec_2}(z^{n}|\tilde{z}^{n},\omega).
\end{align}

\begin{figure}[t]
\begin{center}
\begin{tikzpicture}[scale=1.2, thick]
\node at (-3,2) (jointdist){$X\sim p(x)$};
\node at (0,2.7)  (commonrdn1){$R$};
\node at (0,-0.7)  (commonrdn2){$R$};
\node  at ( 0,1) [rectangle,draw=black!50,fill=green!20!white,inner sep=8pt, minimum size=8mm] (ch) {$p(\tilde{y},\tilde{z}|\tilde{x})$};
\node at (-3,1) (input){$X^n$};
\node at (3,2) (output1){$Y^n$};
\node at (3,-0.1) (output2){$Z^n$};
\node at ( 2,2) [rectangle,draw=black!50,fill=green!20!white] (dec1) {$\mathcal{D}_{1}$};
\node at ( 2,-0.1) [rectangle,draw=black!50,fill=green!20!white] (dec2) {$\mathcal{D}_{2}$};
\node at (-2,1) [rectangle,draw=black!50,fill=green!20!white](enc) {$\mathcal{E}$};
\draw [->] (enc) -- node[above] {$\quad\tilde{X}^n$} (ch);
\draw [->] (ch)  -- node[above] {$\tilde{Y}^n$} (dec1);
\draw [->] (ch)  -- node[below] {$\tilde{Z}^n$} (dec2);
\draw [->,dashed]  (commonrdn1) to  (2,2.7) to (dec1);
\draw [->,dashed]  (commonrdn1) to  (-2,2.7) to (enc);
\draw [->,dashed]  (commonrdn2) to  (2,-0.7) to (dec2);
\draw [->,dashed]  (commonrdn2) to  (-2,-0.7) to (enc);
\draw [->] (input)--(enc);
\draw [->] (dec1)--(output1);
\draw [->] (dec2)--(output2);
\draw [->,dashed] (jointdist)--(input);
\end{tikzpicture}
\caption{Channel simulation over the Broadcast channel $p(\tilde{y},\tilde{z}|\tilde{x})$.}
\label{fig3}
\end{center}
\end{figure}
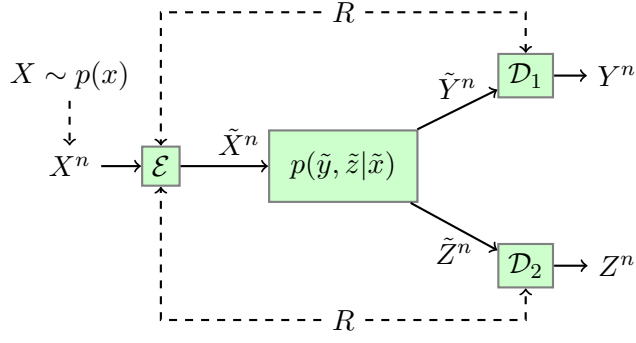

\begin{definition}
An input distribution-channel pair $\big(p(x), p(y,z|x)\big)$ is said to be in the admissible region of the channel-rate triple $\big(p(\tilde{y},\tilde{z}|\tilde{x}), R\big)$ if one can find a sequence of $(n, R)$ simulation codes whose induced joint distributions have marginal distributions $q(x^{n},y^{n}, z^{n})$ that satisfy
$$
\underset{n\rightarrow \infty}{\lim} \Big\| q(x^{n},y^{n}, z^n)-\overset{n}{\underset{i=1}{\prod}}p(x_{i},y_{i}, z_i)\Big\|_{1}=0.
$$
\end{definition}

We can now state our achievability bound for the broadcast channel simulation problem. 

\begin{theorem}[Inner Bound]\label{BCIn}
$\big(p(x), p(y,z|x)\big)$ is in the admissible region of the channel-rate pair $\big(p(\tilde{y},\tilde{z}|\tilde{x}), R\big)$ if
there exist $p(u, v, w, \tilde{x}|x)$, $p(y|\tilde{y}, u, w)$ and $p(z|\tilde{z}, v, w)$ such that $(X, U, V, W, \tilde{X}, \tilde{Y}, \tilde{Z}, Y, Z)$ is distributed according to
$$p(x, u, v, w, \tilde{x}, \tilde{y}, \tilde{z}, y, z)=p(x)p(u,v,w,\tilde{x}|x)
p(\tilde{y},\tilde{z}|\tilde{x}) p(y|\tilde{y}, u, w)p(z|\tilde{z}, v, w),$$ 
that has the given marginal $p(x,y,z)$, and satisfies
\begin{align}
I(WU;\tilde{Y})&>I(UW;X),\nonumber\\
R+I(WU;\tilde{Y})&>I(UW;XYZ),\nonumber\\
I(WV;\tilde{Z})&>I(WV;X),\nonumber\\
R+I(WV;\tilde{Z})&>I(WV;XYZ),\nonumber\\
I(WU;\tilde{Y})+I(WV;\tilde{Z})&>I(UW;X)+I(WV;X)+I(U;V|WX),\nonumber\\
2R+I(UW;\tilde{Y})+I(VW;\tilde{Z})&>I(UW;XYZ)+I(VW;XYZ)\nonumber\\
& \quad+I(U;V|WXYZ),\nonumber\\
\min\{I(W;\tilde{Y}),I(W;\tilde{Z})\}+I(U;\tilde{Y}|W)+I(V;\tilde{Z}|W)&>I(W;X)+I(U;X|W)+I(V;X|W)\nonumber\\
&\quad +I(U;V|WX),\nonumber\\
R+\min\{I(W;\tilde{Y}),I(W;\tilde{Z})\}+I(U;\tilde{Y}|W)+I(V;\tilde{Z}|W)&>I(W;XYZ)+I(U;XYZ|W)\nonumber\\
&\quad+I(V;XYZ|W)+I(U;V|WXYZ),\nonumber\\
R+I(W;YZ|X)+I(UW;\tilde{Y})+I(VW;\tilde{Z})&>I(UW;XYZ)+I(VW;XYZ)\nonumber\\
&\quad +I(U;V|WXYZ).\label{thm:bcc}
\end{align}
\end{theorem}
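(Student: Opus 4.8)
The plan is to prove this via the \emph{output statistics of random binning} (OSRB) framework, following the same template as Theorems~\ref{Pt2PtIn} and~\ref{MACIn}. First I would fix a pmf $p(x,u,v,w,\tilde x,\tilde y,\tilde z,y,z)$ with the stated factorization and marginals, take $n$ i.i.d. copies, and set up a random-binning protocol: assign to each $w^n$ a uniform bin $(F_W,C_W)$, to each $u^n$ a bin $(F_U,C_U)$ and to each $v^n$ a bin $(F_V,C_V)$, where the "$F$" components will be declared to be disjoint pieces of the shared randomness $\omega$ and the "$C$" components will eventually be fixed to good constants. Combining these bins with the source $X^n$, the channel $p(\tilde y^n,\tilde z^n|\tilde x^n)$, and the stochastic maps $p(y^n|\tilde y^n,u^n,w^n)$ and $p(z^n|\tilde z^n,v^n,w^n)$ yields a random pmf $\hat P$ whose $(X^n,Y^n,Z^n)$-marginal is exactly $\prod_i p(x_i,y_i,z_i)$.

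Next I would describe the actual simulation code (the ``forward'' protocol) that mimics $\hat P$: the encoder, holding $X^n$ and the shared bins, first covers $X^n$ by a $W^n$-codeword and then performs a Marton/mutual-covering step to pick a \emph{pair} $(U^n,V^n)$ jointly typical with $(X^n,W^n)$, and generates $\tilde X^n\sim p(\tilde x^n|x^n,u^n,v^n,w^n)$; after the channel, Decoder~1 Slepian--Wolf-decodes $(W^n,U^n)$ from $\tilde Y^n$ and the relevant shared bins and outputs $Y^n\sim p(y^n|\tilde y^n,u^n,w^n)$, and symmetrically Decoder~2 decodes $(W^n,V^n)$ from $\tilde Z^n$ and outputs $Z^n$. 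I would then invoke the OSRB lemmas to show that $\hat P$ and the code-induced pmf are $\epsilon_n$-close in total variation ($\epsilon_n\to0$) as soon as the bin rates satisfy three families of inequalities: (i) covering / Markov-preservation bounds at the encoder --- $\rho_W>I(W;X)$, $\rho_U>I(U;X|W)$, $\rho_V>I(V;X|W)$ and the Marton-sum bound $\rho_U+\rho_V>I(U;X|W)+I(V;X|W)+I(U;V|WX)$, which is the origin of all the $I(\,\cdot\,;X)$ and $I(U;V|WX)$ terms; (ii) Slepian--Wolf decodability at the two receivers, where the \emph{common} layer $W^n$ must be decoded both from $\tilde Y^n$ and from $\tilde Z^n$ --- this is where $\min\{I(W;\tilde Y),I(W;\tilde Z)\}$, $I(U;\tilde Y|W)$, $I(V;\tilde Z|W)$ and the bundled quantities $I(WU;\tilde Y)$, $I(WV;\tilde Z)$ enter; and (iii) near-uniformity and near-independence of the ``$C$'' bins (and of the encoder's private randomness) from $(X^n,Y^n,Z^n)$, which is what lets us fix them to constants while still producing the full law $\prod_i p(x_i,y_i,z_i)$ rather than merely the correct $X^n$-marginal --- this is what upgrades the $I(\,\cdot\,;X)$ targets to $I(\,\cdot\,;XYZ)$ targets and couples everything to $R$, with the encoder's private randomness for the $W$-layer being bounded by $I(W;YZ|X)$, the amount of randomness the target law channels into $(Y^n,Z^n)$ through $W^n$.

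Writing $R$ as the sum of the shared pieces of the bin rates, collecting the above as a system of linear inequalities in the auxiliary rate variables, and performing Fourier--Motzkin elimination, I would then verify that the projection is exactly the region \eqref{thm:bcc}; in particular the two facets carrying $2R$ and $R+I(W;YZ|X)$ should emerge from eliminating the two extra rate variables associated with the $U$- and $V$-layers against the private-randomness budget. A standard selection/derandomization argument then fixes the binning functions and the $C$-bins to deterministic values, completing the proof.

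I expect the main obstacle to be twofold. The first is carrying out the simultaneous Marton-style generation of $(U^n,V^n)$ at the encoder inside the OSRB calculus --- arguing that a jointly-typical pair with the correct conditional statistics exists under random binning, which is exactly what forces the $I(U;V|W\cdot)$ penalty terms and is the least routine ingredient (it is the feature that makes this strictly harder than the point-to-point Theorem~\ref{Pt2PtIn}). The second is the bookkeeping in the Fourier--Motzkin step: one must track carefully which bins are shared versus private, the fact that $W^n$ has to be simultaneously usable by the two decoders with different channel outputs, and the private-randomness budget, and then check that the many raw inequalities collapse precisely to the nine stated ones.
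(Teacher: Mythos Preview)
Your plan is essentially the paper's OSRB proof, and it will work, but two deviations from the paper are worth noting because they make your life harder than necessary.

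First, the paper attaches the genuine shared randomness $\omega$ \emph{only} to $W^n$: it assigns to each $w^n$ the pair $(g_0,\omega)\in[1:2^{n\tilde R_0}]\times[1:2^{nR}]$, and to each $(w^n,u^n)$ and $(w^n,v^n)$ single extra-randomness bins $g_1,g_2$ of rates $\tilde R_1,\tilde R_2$. There is no splitting of $\omega$ into $F_W,F_U,F_V$. This leaves exactly three auxiliary rates $\tilde R_0,\tilde R_1,\tilde R_2$ to be Fourier--Motzkin-eliminated against the three blocks of constraints
\[
\text{(uniform given }X\text{)},\qquad\text{(SW at the two decoders)},\qquad\text{(uniform given }XYZ\text{)},
\]
namely equations \eqref{eq:bcc1}, \eqref{eq:bcc3}, \eqref{eq:bcc5} in the paper, and the nine inequalities of \eqref{thm:bcc} drop out directly. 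Your six-variable parameterization with $R=R_{F_W}+R_{F_U}+R_{F_V}$ contains this as a special case, so it cannot give less, but the extra freedom buys nothing and makes the FME messier.

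Second, and more importantly for your anticipated ``main obstacle'': in the OSRB framework the encoder does \emph{not} perform an explicit Marton mutual-covering step. In Protocol~B the encoder, given $(x^n,g_{[0:2]},\omega)$, simply \emph{samples} $(w^n,u^n,v^n,\tilde x^n)$ from the posterior $P(w^n,u^n,v^n,\tilde x^n\mid x^n,g_{[0:2]},\omega)$ induced by the random binning of Protocol~A. There is no ``jointly-typical pair'' selection and no separate private-randomness budget to track. The $I(U;V|WX)$ and $I(U;V|WXYZ)$ penalties you correctly identify arise automatically: they come from the output-statistics constraints $R+\tilde R_0+\tilde R_1+\tilde R_2<H(WUV|X)$ and $\tilde R_0+\tilde R_1+\tilde R_2<H(WUV|XYZ)$ once these are combined with the two Slepian--Wolf sum constraints via FME, using the identity $H(WU|\cdot)+H(WV|\cdot)-H(W|\cdot)-H(WUV|\cdot)=I(U;V|W,\cdot)$. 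Likewise the $I(W;YZ|X)$ term in the last inequality of \eqref{thm:bcc} comes from pairing the uniform-given-$X$ bound on $\tilde R_0$ (namely $R+\tilde R_0<H(W|X)$) with the uniform-given-$XYZ$ bound on $\tilde R_0+\tilde R_1+\tilde R_2$; it is not a separate ``private-randomness'' accounting. So the step you flag as ``least routine'' is in fact routine in this calculus.
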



\section{Exact channel simulation }\label{Zsec}

Recall that we say a channel $p(y|x)$ can be simulated exactly with a channel $p(\tilde y| \tilde x)$ if there are $n$, encoding map $q^{\enc}(\tilde{x}^{n}|x^{n},\omega)$, and decoding map $q^{\dec}(y^{n}|\tilde{y}^{n},\omega)$ such that the induced distribution given by~\eqref{eq:induced-dist-2} satisfies
\begin{align}\label{eq:n-sim-exact-4}
q(y^n|x^n)=\prod_{i=1}^np(y_i|x_i).
\end{align} 
In this section we state our results about the channel simulation problem in the exact model.

\begin{theorem}\label{exact-converse}
A channel $p(y|x)$ can be simulated from $p(\tilde{y}|\tilde{x})$ in the exact model with infinite  shared randomness \color{black} only if for any $\alpha>0$ we have
\begin{align*}
C_\alpha(p(y|x))&\leq C_\alpha(p(\tilde y|\tilde x)),\\
\mathsf{Diam}_\alpha(p(y|x))&\leq \mathsf{Diam}_\alpha(p(\tilde y|\tilde x)),
\end{align*}
where $C_\alpha$ denotes the $\alpha$-capacity defined in~\eqref{eq:alpha-capacity}, and  
$$\mathsf{Diam}_\alpha(p(y|x))\triangleq\max_{p(x), q(x)}D_{\alpha}(p(y)\|q(y)).$$
\end{theorem}

\begin{remark}
To the best of our knowledge, we define the notion of channel diameter $\mathsf{Diam}_\alpha(p(y|x))$ for the first time. 
\\
Theorem \ref{exact-converse} implies that the exact simulation of $p(y|x)$ with a noiseless link of rate $R$ with infinite shared randomness is possible only if $C_\infty(p(y|x))\leq R.$
In fact, as shown in \cite{Winter1} (see \cite{exactcoordination} for a discussion), 
$C_\infty(p(y|x))\leq R$
is a sufficient and necessary condition for exact simulation of $p(y|x)$ with a noiseless link of rate $R$.
\end{remark}
\begin{remark} As it becomes clear from the proof of Theorem \ref{exact-converse}, any function $\Phi(p(y|x))$ that satisfies additivity and data processing properties, namely,
$$\Phi\big(\prod_{i=1}^np(y_i|x_i)\big)=n \Phi\big(p(y|x)\big)$$
and 
$$\Phi(p(y|x))\geq \Phi(p(z|x)), \qquad \emph{if } ~~~p(z|x)=\sum_{y}p(y|x)p(z|y)$$
can be used instead of $C_\alpha$ and $\mathsf{Diam}_\alpha$ above to write an infeasibility result when there is no shared randomness. Furthermore, if  $\Phi(p(y|x))$ is quasi-convex in $p(y|x)$, one can claim the infeasibility result even when there is infinite shared randomness. 
The above particular choices of $\Phi(p(y|x))$ as $C_\alpha$ and $\mathsf{Diam}_\alpha$ are shown later to be particularly useful in finding \emph{tight bounds} for a few examples we consider. But it is possible to find other choices for $\Phi(p(y|x))$: for instance, $E_{p(y|x)}(R)$, the reliability function of $p(y|x)$ at a given rate $R$ satisfies the additivity and data processing inequalities by definition, and is quasi-convex since shared randomness does not increase the optimal error exponent. 
\color{black}
\end{remark}

Using the above theorem, we show that the degradation strategy of Fig.~\ref{fig2news4} is optimal in the exact model for simulating a BSC channel from a BEC channel (for any amount of shared randomness).

\begin{theorem}\label{BSCCAP}
For any amount of shared randomness, the $\BSC(p)$ channel with parameter $p\in[0,1/2]$ can be exactly simulated from the $\BEC(\epsilon)$ if and only if $p\geq \epsilon/2$. In particular, shared randomness is not helpful in this exact simulation problem.
\end{theorem}
\color{black}

A channel is called binary-input binary-output (BIBO) if both the input and the output of the channel are a single bit.  A BIBO channel can be characterized with two parameters $r=p_{Y|X}(0|0)$ and $s=p_{Y|X}(0|1)$ as depicted in Fig.~\ref{AssBSCZ}. The following theorem studies BIBO channels that can be exactly simulated from another BIBO channel. 
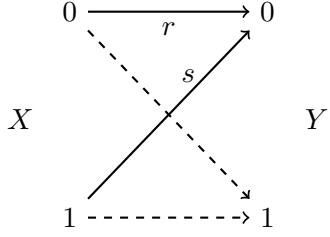
\begin{figure}
\begin{center}
\begin{tikzpicture}[scale=1.3,thick]
\node at (-1.5,0.9){$X$};
\node at (1.5,0.9){$Y$};
\node at (-1,2)(test1){$0$};
\node at (1,-0.1)(test2){$1$};
\node at (-1,-0.1) (test3){$1$};
\node at (1,2)(test4){$0$};
\node at (0.2,1.35){$s$};
\draw [->,dashed] (test1)--(test2);
\draw [->] (test3)--(test4);
\draw [->,dashed] (test3)--(test2);
\draw [->](test1)--node[below]{$r$}(test4);
\end{tikzpicture}
\caption{A BIBO channel $p(y|x)$ with parameters $(r,s)$.}
\label{AssBSCZ}
\end{center}
\vskip -0.5cm
\end{figure}

\begin{theorem}\label{BSCCAP2} Depending on whether we have shared randomness or not, we have
\begin{itemize}
\item (Infinite shared randomness): A channel with parameters $(r,s)$ can be simulated exactly  from a channel with parameters $(r',s')$ if and only if $(r,s)$ is in the convex polygon with six vertices $(0,0), (1,1), (r',s'), (s',r'), (\bar{s}', \bar{r}'),$ and  $(\bar{r}', \bar{s}')$ where $\bar{s}'=1-s'$ and $\bar{r}'=1-r'$ (see Fig.~\ref{InnerBoundExactFig}).
\item (No shared randomness):
A BIBO channel with parameters $(r,s)$ can be simulated from another BIBO channel with parameters $(r',s')$ in the exact model  if $(r,s)$ is in the union of two parallelogram with vertices $(0,0), (r',s'), (\bar{r}', \bar{s}'), (1,1)$ and
$(0,0), (s',r'), (\bar{s}', \bar{r}'), (1,1)$  (see Fig.~\ref{InnerBoundExactFig}). Conversely, if a channel with parameters $(r,s)$ can be simulated exactly from a channel with parameters $(r',s')$, then $(r,s)$ has to be in the convex polygon with six vertices $(0,0), (1,1), (r',s'), (s',r'), (\bar{s}', \bar{r}'),$ and  $(\bar{r}', \bar{s}')$ depicted in Fig.~\ref{InnerBoundExactFig}. In other words, the outer bound  is the convex hull of the given achievable inner bound.
\end{itemize}
\end{theorem}

\color{black}

\begin{figure}
\begin{center}
\includegraphics[scale=0.5]{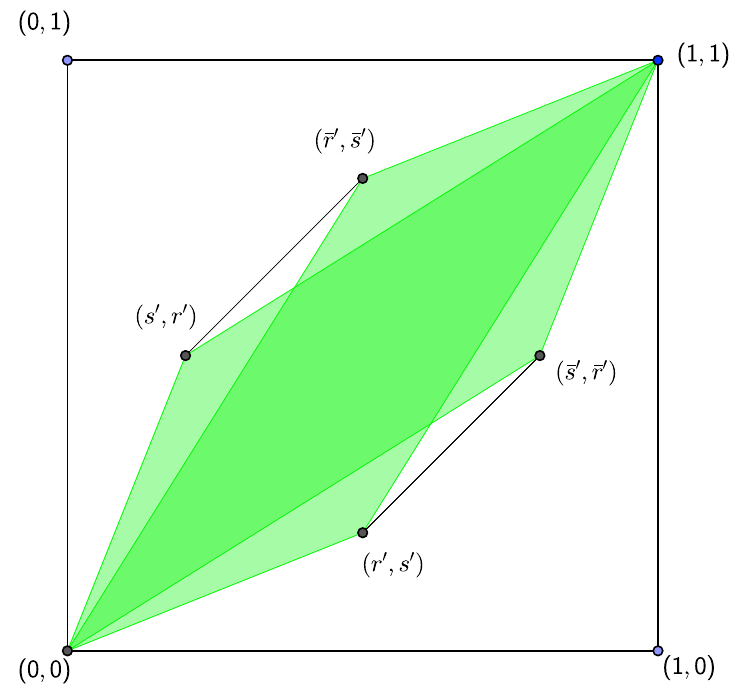}
\caption{The inner bound (union of two parallelogram) and outer bound (convex polygon) for exact channel simulation of BIBO  channel from another BIBO channel when there is no shared randomness. The convex polygon bound is the simulation region when there is infinite shared randomness.}
\label{InnerBoundExactFig}
\end{center}
\end{figure}


\section{Proofs}\label{sec:proofs}

This section is devoted to the proofs of the results stated in the previous two sections. 

\subsection{Point to point channel: inner bound}
\begin{proof}[Proof of Theorem \ref{Pt2PtIn}]
We apply the OSRB technique of \cite{OSRB} to prove the theorem.  Our proof consists of three parts. In the first part we introduce two protocols, A and B, each of which induces a pmf on a certain set of random variables. Protocol A has the desired i.i.d.\ property on $X^{n}$ and $Y^{n}$, but leads to no concrete coding algorithm. However, Protocol B is suitable for construction of a code, with one exception: Protocol B is assisted with an extra common randomness that does not really exist in the model. In the second part of the proof we find conditions on $R$ implying that two certain induced distributions are almost identical. In the third part of the proof, we eliminate the extra common randomness given to  Protocol B without significantly disturbing the pmf induced on the desired random variables $(X^n,Y^{n})$. This makes Protocol B useful for code construction.\\

\noindent
\textbf{Part (1)}: We define two protocols each of which induces a joint pmf on random variables of the corresponding protocol.\\

\noindent
\emph{Protocol A [Not useful for coding].} Let $(U^{n},X^{n},\tilde{X}^{n},\tilde{Y}^n,Y^{n})$ be $n$ i.i.d.\ copies of the joint pmf $p(u,x,\tilde{x},\tilde{y},y)$. Consider the following construction:
\begin{itemize}
\item To each $u^{n}\in \mathcal U^n$ assign two random bin indices $g\in[1:2^{n\tilde{R}}]$ and  $\omega\in[1:2^{nR}]$.
\item Consider a Slepian-Wolf decoder for estimating $\hat{u}^{n}$ from $(\omega,g,\tilde{y}^{n})$. Here we are considering $\tilde{y}^{n}$ as side information and $\omega, g$ as the random bins of the source $u^n$ that we want to decode.
\end{itemize}

The rate constraints on $R, \tilde R$ for the success of this decoder will be imposed
later, although this decoder can be conceived even when there is no guarantee of successful decoding. We denoted the random pmf induced by the random binning and the Slepian-Wolf decoder by $Q$ and $Q^{\mathsf{sw}}$, respectively. We then obtain the following joint distribution: 
\begin{align}
Q(x^{n},u^{n},\hat{u}^n,\tilde{x}^{n},\tilde{y}^n, y^{n},g,\omega)&=p(x^{n})p(u^{n},\tilde{x}^{n}|x^{n})p(\tilde{y}^{n}|\tilde{x}^{n})\nonumber\\
&\quad\times Q(g,\omega |u^{n})Q^{\mathsf{sw}}(\hat{u}^{n}|g,w,\tilde{y}^{n})p(y^{n}|\tilde{y}^{n},u^{n})\nonumber\\
&=p(x^{n})Q(g,\omega |x^{n})Q(u^{n},\tilde{x}^{n}|x^{n},g,\omega)p(\tilde{y}^{n}|\tilde{x}^{n})\nonumber\\
&\quad \times Q^{\mathsf{sw}}(\hat{u}^{n}|g,w,\tilde{y}^{n})p(y^{n}|\tilde{y}^{n},u^{n}).
\label{eq:fist-part-A}
\end{align}
Note that we have used capital $Q$ for random pmf's in the above equation.\\

\noindent
\emph{Protocol B [Useful for coding after removing an extra common randomness]}. In this protocol we assume that Alice and Bob have access to the common randomness $\omega$ and an extra common randomness $G$, where $G$ is mutually independent of $X^n$ and $\omega$. We assume that $G$ is distributed uniformly over the set  $[1:2^{n\tilde{R}}]$. Now we use the following protocol:
\begin{itemize}
 \item First, Alice having $(g,\omega,x^{n})$ generates $u^{n}$ according to the pmf $Q(u^{n}, \tilde{x}^{n}|g,\omega,x^{n})$ of Protocol A, and sends $\tilde{x}^{n}$ over the channel to Bob. Then Bob receives $\tilde{y}^n$. Having $(g,\omega,\tilde{y}^{n})$, Bob uses the Slepian-Wolf decoder of protocol A to generate $\hat u^n$ as a estimation of $u^{n}$.

\item Having $(\hat{u}^{n},\tilde{y}^{n})$, Bob generates $Y^{n}$ according to $p(y^{n}|\tilde{y}^{n}, \hat{u}^{n})=\prod_{i=1}^n
p_{Y|\tilde{Y}U}(y_i|\tilde{y}_i,\hat{u}_i)$.
\end{itemize}

The random pmf induced by the protocol, denoted by $\hat{Q}$, is equal to
\begin{align}\label{eq:fist-part-B}
\hat{Q}(x^{n},u^{n},\hat{u}^n,\tilde{x}^{n},\tilde{y}^n,y^{n},g,\omega)
&=p^{\mathsf u}(\omega)p^{\mathsf u}(g)p(x^{n})Q(u^{n},\tilde{x}^{n}|g,\omega, x^{n})p(\tilde{y}^n|\tilde{x}^n)\nonumber\\
&\quad \times  Q^{\mathsf{sw}}(\hat{u}^{n}|\omega ,g,\tilde{y}^{n})p(y^{n}|\tilde{y}^{n},\hat{u}^{n}),
\end{align}
where $p^{\mathsf u}$ denotes the uniform distribution.\\

\noindent\textbf{Part (2)}: In this part we put  sufficient conditions under which the induced distributions $Q$ and $\hat Q$ given by~\eqref{eq:fist-part-A} and~\eqref{eq:fist-part-B} are approximately the same. 
The first step is to observe that $g$ and $\omega$ are the bin indices of $u^n$. Substituting $T=2$, $X_1\leftarrow U$, $X_2\leftarrow U$, $Z\leftarrow X$, $b_1 \leftarrow g$ and $b_2 \leftarrow \omega$ in Theorem~1 of~\cite{OSRB}, implies that if
\begin{align}
R+\tilde{R}<H(U|X),\label{ist1}
\end{align}
then there exists $\epsilon^{(n)}_{0}\rightarrow 0$ such that $$\mathbb{E}\|Q(g,\omega|x^{n})-p^{\mathsf u}(\omega)p^{\mathsf u}(g)\|_1 \leq \epsilon^{(n)}_{0}.$$\color{black} Observe that $p^{\mathsf u}(\omega)p^{\mathsf u}(g)=\hat{Q}(g,\omega)$.
This implies that the joint pmf of all random variables, excluding $y^n$, of the two protocols are close in total variation distance, i.e, 
\begin{align}\mathbb{E}\|
\hat{Q}(x^{n},u^{n}&,\hat{u}^n,\tilde{x}^{n},\tilde{y}^n,g,\omega)
-Q(x^{n},u^{n},\hat{u}^n,\tilde{x}^{n},\tilde{y}^n,g,\omega)\|_1\leq \epsilon^{(n)}_{0}.
\label{eqn:P10}
\end{align}\color{black}
To ensure the above equation with $y^n$ included, we begin by investigating conditions that make the Slepian-Wolf decoder of Protocol A succeed with high probability. By the Slepian-Wolf theorem as long as
\begin{align}
R+\tilde{R}>H(U|\tilde{Y})\label{ist2}
\end{align}
holds, we have
\begin{align}
Q(x^{n},u^{n},\hat{u}^n&,\tilde{x}^{n},\tilde{y}^n,g,\omega)\overset{\epsilon^{(n)}_{1}}{\approx}Q(x^{n},u^{n},\tilde{x}^{n},\tilde{y}^n,g,\omega)\boldsymbol{1}\{\hat{u}^n=u^{n}\}.\label{eqn:P11}
\end{align}
for some vanishing sequence $\epsilon^{(n)}_{1}$. Then using equations~\eqref{eqn:P10} and~\eqref{eqn:P11} we can apply Lemma~3 of~\cite{OSRB} to write
\begin{align}
\hat{Q}(x^{n},u^{n},\hat{u}^n&,\tilde{x}^{n},\tilde{y}^n,g,\omega)\overset{\epsilon ^{(n)}_{0}+\epsilon^{(n)}_{1}}{\approx}Q(x^{n},u^{n},\tilde{x}^{n},\tilde{y}^n,g,\omega)\boldsymbol{1}\{\hat{u}^n=u^{n}\}.
\end{align}
Moreover, the third part of Lemma~3 of~\cite{OSRB} implies that
\begin{align}
\hat{Q}(x^{n},u^{n},\hat{u}^n,\tilde{x}^{n},\tilde{y}^n,g,\omega)p(y^n|\hat{u}^{n},\tilde{y}^{n})
&\overset{\epsilon ^{(n)}_{0}+\epsilon^{(n)}_{1}}{\approx}
Q(x^{n},u^{n},\tilde{x}^{n},\tilde{y}^n,g,\omega)\boldsymbol{1}\{\hat{u}^n=u^{n}\}p(y^n|\hat{u}^{n},\tilde{y}^{n})\nonumber\\
&~\quad=Q(x^{n},u^{n},\tilde{x}^{n},\tilde{y}^n,g,\omega)\boldsymbol{1}\{\hat{u}^n=u^{n}\}p(y^n|u^n, \tilde{y}^n)\nonumber\\
&~\quad =Q(x^{n},u^{n},\tilde{x}^{n},\tilde{y}^n,g,\omega)p(y^n|u^n, \tilde{y}^n)\boldsymbol{1}\{\hat{u}^n=u^{n}\}\nonumber\\
&~\quad =Q(x^{n},u^{n},\tilde{x}^{n},\tilde{y}^n,y^n,g,\omega)\boldsymbol{1}\{\hat{u}^n=u^{n}\}.\nonumber
\end{align}
Then, by part~1~(second item) of Lemma~3 of~\cite{OSRB} we have
$
\hat{Q}(g,x^n,y^n)\overset{\epsilon ^{(n)}_{0}+\epsilon^{(n)}_{1}}{\approx}Q(g,x^n,y^n)$.
Note, in particular, that the marginal pmf on $(X^n,Y^n)$ of $Q$ is equal to $p(x^n,y^n)$ implying that $\hat{Q}(x^n, y^n)$ is within $\epsilon ^{(n)}_{0}+\epsilon^{(n)}_1$ distance of $p(x^n, y^n)$.

To summarize, assuming~\eqref{ist1} and~\eqref{ist2}, and having access to common randomness $\omega, G$, Alice and Bob can simulate the channel $p(y|x)$ using the channel $p(\tilde y| \tilde x)$ according to Protocol B. As discussed above, with high probability $\hat u^n$ generated by Bob would be equal to $u^n$, and the final pmf induced on $(X^n, Y^n)$ would be close to the desired pmf $p(x^n, y^n)$.\\

\noindent\textbf{Part (3)}:
In the above protocol we assumed that Alice and Bob have access to an extra randomness $G$
which is not present in the model. To eliminate this extra common randomness, we will fix a particular instance $g$ of $G$ and show that the same protocol works even if we fix $G=g$. To prove this note that by letting $G=g$, the induced pmf $\hat{Q}(x^n,y^n)$ changes to the conditional pmf $\hat{Q}(x^n,y^n|g)$. But if $G$ is almost independent of $(X^n,Y^n)$, the conditional pmf $\hat{Q}(x^n,y^n|g)$ would be close to the desired distribution as well. To
obtain the independence, we again use Theorem~1 of~\cite{OSRB}. Substituting $T = 1$, $X_1\leftarrow U$ and
$Z \leftarrow XY$ in Theorem 1 of \cite{OSRB}, we find that if
\begin{align}
\tilde{R}<H(U|XY),
\label{ist3}
\end{align}
then $Q(x^n,y^n,g)\overset{\epsilon^{(n)}_{2}}{\approx}p^{\mathsf u}(g)p(x^n,y^n)$, for some vanishing $\epsilon ^{(n)}_{2}$. Thus, by triangular inequality for total variation distance, we have $\hat{Q}(x^n,y^n,g)\overset{\epsilon ^{(n)}}{\approx}p^{\mathsf u}(g)p(x^n,y^n)$, where $\epsilon^{(n)}=\sum_{i=0}^{2}\epsilon^{(n)}_{i}$. From the definition of total variation distance for random pmf's, the average of the total variation distance between $\hat{q}(x^n,y^n,g)$ and $p^{\mathsf u}(g)p(x^n,y^n)$ over all random binnings is small. Thus, there exists a fixed binning with the corresponding pmf $\hat{q}$ such that $\hat{q}(x^n,y^n,g)\overset{\epsilon^{(n)}}{\approx}p^{\mathsf u}(g)p(x^n,y^n)$. Next, Lemma~3 of~\cite{OSRB} guarantees the existence of an instance $g$ such that $\hat{p}(x^n,y^n|g)\overset{2\epsilon^{(n)}}{\approx}p(x^n,y^n)$. Then the extra shared randomness $G$ can be eliminated by fixing it to be $G=g$. 

Finally, observe that the conditions of~\eqref{thm1region} are seen to be equivalent to~\eqref{ist1}, \eqref{ist2} and~\eqref{ist3} after eliminating $\tilde{R}$ using Fourier-Motzkin elimination.
\end{proof}

\begin{proof}[Proof of Theorem \ref{Subth}]
We would like to prove that to simulate a BSC channel with a non-uniform input pmf from a BEC channel, we can do better than $p=\frac{\epsilon}{2}$ (obtained by a degradation scheme). To be more precise, let $\Bern(q)$ be the Bernoulli distribution with parameter $q$. 
We show that $(\Bern(q), \BSC(\frac{t\epsilon}{2}))$ is in the admissible region of $(\BEC(\epsilon),0)$ for any $t\in(0,1]$ such that 
\begin{align}t\left[\epsilon+(1-\epsilon)(1-h(q))+h\left(\frac{\epsilon}{2}\right)\right]-\epsilon-h\left(\frac{t\epsilon}{2}\right)\geq 0.
\label{eqneqn123}\end{align}
Indeed if $q\neq 1/2$, this inequality strictly holds for $t=1$. So for any $q\neq 1/2$, one can find $t<1$ so that this inequality is still valid. This would demonstrate the sub-optimality of a degradation scheme for non-uniform input distributions.

We use Theorem~\ref{Pt2PtIn} to prove the above claim. For this we need to specify the joint pmf of random variables $X, Y, U, \tilde{X}, \tilde{Y}$ as follows:
\begin{itemize}
\item Let $X$ to be distributed according to $\Bern(q)$.
\item Assume that $\tilde{X}$ is uniform over $\{0,1\}$ and independent of $X$.
\item Let $W$ be $\Bern(t)$ and independent of $(X, \tilde X)$. 
\item Define $U$ as follows: let $U=(W, K)$ where $K=(X, \tilde{X})$ if $W=0$, and $K=X+\tilde X(\mod 2)$ if $W=1$. 
\item To specify $p(y|\tilde{y}, u)$ we proceed as follows: 
\begin{itemize}
\item If $W=0$, we let $Y=X$; note that in this case $X$ is a part of $U$. 
\item If $W=1$, we look at $\tilde Y$; if it is the erasure flag, we choose $Y$ uniformly at random. Otherwise, $\tilde Y=\tilde X$, so we may let $Y=\tilde X + K(\mod 2)=X$. 
\end{itemize}
\end{itemize}

This procedure induces the following distribution on $(X, Y)$: $X$ is chosen according to $\Bern(q)$; with probability $(1-t)+t(1-\epsilon)$ we have $Y=X$, and $Y$ with probability $t\epsilon$ is chosen uniformly at random (and independent of $X$). This is equivalent with the $\BSC(p)$ channel with $\Bern(q)$ input distribution where
$$p=\frac{t\epsilon}{2}.$$

We now need to verify~\eqref{thm1region} for $R=0$, i.e., $I(U; \tilde Y)> I(U; XY)\geq I(U; X)$.
We have 
$$I(U;\tilde{Y})=(1-\epsilon)I(U;\tilde{X})=(1-\epsilon)\big[1-t+t(1-h(q))\big].$$
Next,
$$I(U;XY)=I(WK;XY)=I(W;XY)+I(K;XY|W).$$
Moreover,  
\begin{align*}I(W;XY)&=H(XY)-H(XY|W)\\&=H(XY)-(1-t)H(X)-tH(XY|W=1)\\&=h(q)+h(p)-(1-t)h(q)-t\cdot h(q)-t\cdot h(\frac{\epsilon}{2})
\\&=h(p)-t\cdot h(\frac{\epsilon}{2})
\\&=h(\frac{t\epsilon}{2})-t\cdot h(\frac{\epsilon}{2}),
\end{align*}
and
$$I(K;XY|W)=(1-t)I(X\tilde{X};XX)+tI(X+\tilde{X};XY)=1-t,$$
where we use the fact that $\tilde{X}$ is independent of $(X, Y)$ if $W=1$.
Therefore,
\begin{align*}
I(U;\tilde{Y})-I(U;XY)&=(1-\epsilon)\big[1-t+t(1-h(q))\big]-h(\frac{t\epsilon}{2})+t\cdot h(\frac{\epsilon}{2})-1+t
\\&=t\big[\epsilon+(1-\epsilon)(1-h(q))+h(\frac{\epsilon}{2})\big]-\epsilon-h(\frac{t\epsilon}{2}),
\end{align*}
which is positive by assumption~\eqref{eqneqn123}.
\end{proof}

\subsection{Point to point channel: outer bound}

\begin{proof}[Proof of Theorem \ref{converseThm}]
Take an encoding map $q^{\enc}(\tilde x^n|x^n, \omega)$ and a decoding map $q^{\dec}(y^n| \tilde y^n, \omega)$ with the induced distribution $q(x^n, y^n, \tilde x^n, \tilde y^n, \omega)$ as described in~\eqref{eq:induced-dist-2} such that  
\begin{align}\label{eq:conversethm-eps-2}
\Big\| q(x^{n},y^{n})-\overset{n}{\underset{i=1}{\prod}}p(x_{i})p(y_{i}|x_i)\Big\|_{1}\leq \epsilon.
\end{align}
We have the Markov chains 
\begin{align*}
X^n\rightarrow \omega\tilde{X}^n\rightarrow\omega\tilde{Y}^n\rightarrow Y^n,
\end{align*}
and
\begin{align*}
\omega\rightarrow\tilde{X}^n\rightarrow \tilde{Y}^n.
\end{align*}
Moreover, $\omega$ is independent of $X^n$. Therefore,
$I(X^n;Y^n|\omega)\leq I(\tilde{X}^n;\tilde{Y}^n|\omega)$. On the other hand, $I(X^n;Y^n|\omega)=I(X^n;Y^n\omega)\geq I(X^n;Y^n)$, and
$ I(\tilde{X}^n;\tilde{Y}^n|\omega)\leq  I(\omega\tilde{X}^n;\tilde{Y}^n)=I(\tilde{X}^n;\tilde{Y}^n)$. As a result, $I(X^n;Y^n)\leq I(\tilde{X}^n;\tilde{Y}^n)$. 

To proceed let
$$f_1=\frac{1}{n}\sum_{i=1}^nI(X_{[1:i-1]}Y_{[1:i-1]};X_i Y_i).$$
Since by~\eqref{eq:conversethm-eps-2} the induced distribution on $(X^n, Y^n)$ by the code is almost i.i.d., $f_1$ should be small. Indeed, Lemma~\ref{cuflem} below shows that $f_1$ vanishes as $\epsilon$ goes to zero.
We can then write
\begin{align}
\sum_{i=1}^n I(X_i; Y_i)&= \sum_{i=1}^n H(X_i)+H(Y_i)-H(X_i, Y_i)
\nonumber\\&\leq \sum_{i=1}^n H(X_i)+H(Y_i)-H(X_i, Y_i|X_{[1:i-1]},Y_{[1:i-1]})
\nonumber\\&= \sum_{i=1}^n H(X_i|X_{[1:i-1]})+H(Y_i|Y_{[1:i-1]})-H(X_i, Y_i|X_{[1:i-1]},Y_{[1:i-1]})+I(X_i;X_{[1:i-1]})+I(Y_i;Y_{[1:i-1]})
\nonumber\\& =H(X^n)+H(Y^n)-H(X^n, Y^n)+\sum_{i=1}^n I(X_i;X_{[1:i-1]})+I(Y_i;Y_{[1:i-1]})
\nonumber\\&\leq I(X^n;Y^n)+2nf_1
\nonumber\\&\leq I(\tilde{X}^n;\tilde{Y}^n)+2nf_1
\nonumber\\&\leq \sum_{i=1}^nI(\tilde{X}_i;\tilde{Y}_i)+2nf_1,\label{eq:equiv-1}
\end{align}
where in the last step we used the familiar expansion of mutual information for memoryless channels used in the converse proof of a point-to-point channel.

For $\tilde{X}_i, \tilde{Y}_i$, let $\tilde{U_i}$ be the random variable such that $\tilde{X}_i\rightarrow\tilde{U}_i\rightarrow\tilde{Y}_i$ forms a Markov chain and $\beta I(\tilde{U}_i;\tilde{X}_i\tilde{Y}_i)+\gamma I(\tilde{U}_i;\tilde{X}_i)+\theta I(\tilde{U}_i;\tilde{Y}_i)$ reaches its minimum. Assume that $\tilde{U_i}$ is constructed to be conditionally  independent of other variables given $\tilde{X}_i, \tilde{Y}_i$. \color{black} Then, we obtain a random variable $\tilde{U}^n=(\tilde{U}_1, \tilde{U}_2, \cdots, \tilde{U}_n)$ such that 
$$\tilde{X}^n\rightarrow\tilde{U}^n\rightarrow\tilde{Y}^n,$$ 
forms a Markov chain, and that $q(\tilde{u}^n|\tilde{x}^n)$ and $q(\tilde{y}^n|\tilde{u}^n)$ are product channels. The joint pmf of all random variables factorizes as
\begin{align}
q(x^n, \omega, \tilde x^n, \tilde u^n, \tilde y^n, y^n)&=
p(x^n)p(\omega)q^{\enc}(\tilde x^n|x^n, \omega)
q(\tilde u^n|\tilde x^n)q(\tilde y^n|\tilde u^n)q^{\dec}(y^n|\tilde y^n,\omega)\label{eq:revision2}
\\&=q(x^n,\omega,\tilde x^n, \tilde u^n)q(\tilde y^n|\tilde u^n)q^{\dec}(y^n|\tilde y^n,\omega)\nonumber\\
& = q(x^n,\omega,\tilde x^n, \tilde u^n)q(y^n\tilde y^n|\tilde u^n, \omega).\nonumber
\end{align}
This shows that $X^n\tilde{X}^n\rightarrow\tilde{U}^n\omega\rightarrow \tilde{Y}^nY^n$ forms a Markov chain. In particular, we conclude that 
$$X^n\rightarrow\tilde{U}^n\omega\rightarrow Y^n,$$ 
forms a Markov chain.


Let $U_i=(\tilde{U}^n,\omega)$. Then $X^n\rightarrow U_i\rightarrow Y^n$, and hence $X_i\rightarrow U_i\rightarrow Y_i$ form Markov chains. Next, we have
\begin{align}
\sum_{i=1}^n I(U_i;X_i, Y_i)&\leq
\sum_{i=1}^nI(U_iX_{[1:i-1]}Y_{[1:i-1]};X_i, Y_i)
\nonumber\\&=\sum_{i=1}^nI(X_{[1:i-1]}Y_{[1:i-1]};X_i, Y_i)+\sum_{i=1}^nI(U_i;X_i, Y_i|X_{[1:i-1]},Y_{[1:i-1]})
\nonumber\\&=nf_1+\sum_{i=1}^nI(\tilde{U}^n,\omega;X_i, Y_i|X_{[1:i-1]},Y_{[1:i-1]})
\nonumber\\&=nf_1+I(\tilde{U}^n\omega; X^n, Y^n)
\nonumber\\&\leq nf_1+H(\omega)+I(\tilde{U}^n;X^n, Y^n)
\nonumber\\ & \leq nf_1+H(\omega)+I(\tilde{U}^n;\tilde{X}^n, \tilde{Y}^n)
\nonumber\\ & \leq nf_1+nR+H(\tilde{U}^n)-H(\tilde{U}^n|\tilde{X}^n, \tilde{Y}^n)
\nonumber\\ & \leq nf_1+nR+\sum_{i=1}^nH(\tilde{U}_i)-H(\tilde{U}_i|\tilde{X}^n, \tilde{Y}^n, \tilde{U}_{[1:i-1]})
\nonumber\\ & = nf_1+nR+\sum_{i=1}^nH(\tilde{U}_i)-H(\tilde{U}_i|\tilde{X}_i, \tilde{Y}_i)\label{eq:equiv-2abc}
\\ & = nf_1+nR+\sum_{i=1}^nI(\tilde{U}_i;\tilde{X}_i, \tilde{Y}_i),\label{eq:equiv-2}
\end{align}
where equation \eqref{eq:equiv-2abc} follows from the fact that $p(\tilde{u}^n|\tilde x^n, \tilde y^n)=\prod_{i=1}^n p(\tilde{u}_i|\tilde x_i, \tilde y_i)$. This fact follows from the factorization
 $$p(\tilde x^n, \tilde u^n, \tilde y^n)=p(\tilde x^n)\prod_{i=1}^np(\tilde u_i, \tilde y_i|\tilde x_i).$$
Further, we have
\begin{align}
\sum_{i=1}^n I(U_i; Y_i)&\leq
\sum_{i=1}^nI(U_i,Y_{[1:i-1]};Y_i)
\nonumber\\&=\sum_{i=1}^nI(Y_{[1:i-1]}; Y_i)+\sum_{i=1}^nI(U_i;Y_i|Y_{[1:i-1]})
\nonumber\\&\leq nf_1+\sum_{i=1}^nI(\tilde{U}^n,\omega; Y_i|Y_{[1:i-1]})
\nonumber\\&=nf_1+I(\tilde{U}^n\omega; Y^n)
\nonumber\\ & \leq nf_1+2H(\omega)+I(\tilde{U}^n; \tilde{Y}^n)\label{eq:revision1}
\\ & \leq nf_1+2nR+H(\tilde{Y}^n)-H(\tilde{Y}^n|\tilde{U}^n)
\nonumber\\ & \leq nf_1+2nR+\sum_{i=1}^nH(\tilde{Y}_i)-H(\tilde{Y}_i|\tilde{U}^n, \tilde{Y}_{[1:i-1]})
\nonumber\\ & = nf_1+2nR+\sum_{i=1}^nH(\tilde{Y}_i)-H(\tilde{Y}_i|\tilde{U}_i) \label{eq:equiv-3news}
\\ & = nf_1+2nR+\sum_{i=1}^nI(\tilde{U}_i;\tilde{Y}_i).
\label{eq:equiv-3}
\end{align}
where equation \eqref{eq:revision1} holds because
\begin{align*}I(\tilde{U}^n\omega; Y^n)&\leq H(\omega)+I(\tilde{U}^n;Y^n|\omega)
\\&
\leq H(\omega)+I(\tilde{U}^n;\tilde{Y}^nY^n|\omega)
\\&= H(\omega)+I(\tilde{U}^n;\tilde{Y}^n|\omega)
\\&\leq  2H(\omega)+I(\tilde{U}^n;\tilde{Y}^n)
\end{align*}
\color{black}
and  equation \eqref{eq:equiv-3news} is due to the fact that $p(\tilde y^n|\tilde u^n)=\prod_{i=1}^np(\tilde y_i|\tilde u_i)$.
Following similar steps, using the fact that $p(\tilde u^n|\tilde x^n)=\prod_{i=1}^np(\tilde u_i|\tilde x_i)$ and 
\begin{align*}I(\tilde{U}^n\omega; X^n)&\leq H(\omega)+I(\tilde{U}^n;X^n)
\\&
\leq H(\omega)+I(\tilde{U}^n;\tilde{X}^nX^n)
\\&=H(\omega)+I(\tilde{U}^n;\tilde{X}^n)
\end{align*}\color{black}
which holds because of equation \eqref{eq:revision2}, 
one can show that
\begin{align}
\sum_{i=1}^n I(U_i; X_i)&\leq nf_1+nR+\sum_{i=1}^nI(\tilde{U}_i;\tilde{X}_i).
\label{eq:equiv-4}
\end{align}

Let $T$ be a random variable distributed uniformly over $[1:n]$ and independent of previously defined random variables. Then, inequalities~\eqref{eq:equiv-1}-\eqref{eq:equiv-4} can be equivalently written as 
\begin{align*}
I(X_T;Y_T|T)&\leq I(\tilde{X}_T;\tilde{Y}_T|T)+2f_1,\\
I(U_T;X_TY_T|T)&\leq I(\tilde{U}_T;\tilde{X}_T\tilde{Y}_T|T)+R+f_1,\\
I(U_T;X_T|T)&\leq I(\tilde{U}_T;\tilde{X}_T|T)+R+f_1,\\
I(U_T;Y_T|T)&\leq I(\tilde{U}_T;\tilde{Y}_T|T)+2R\color{black}+f_1.
\end{align*}
Let $f_2=I(T;X_T,Y_T)$. Observe that 
by Lemma~\ref{cuflem} at the end of the proof, $f_2$ vanishes as $\epsilon$ converges to zero..  Then the above set of equations imply that
\begin{align*}
I(X_T;Y_T)&\leq I(\tilde{X}_T;\tilde{Y}_T|T)+2f_1+f_2,\\
I(U_T;X_T,Y_T)&\leq I(\tilde{U}_T;\tilde{X}_T\tilde{Y}_T|T)+R+f_1+f_2,\\
I(U_T;X_T)&\leq I(\tilde{U}_T;\tilde{X}_T|T)+R+f_1+f_2,\\
I(U_T;Y_T)&\leq I(\tilde{U}_T;\tilde{Y}_T|T)+2R\color{black}+f_1+f_2.
\end{align*}
Let $X=X_T$ and $Y=Y_T$, and note that $X\rightarrow U_T\rightarrow Y$ forms a Markov chain. Then by the above inequalities for non-negative reals $\beta$, $\gamma$, and $\theta$ we have
\begin{align*}
I(X;Y)+&\min_{U: X-U-Y}\big[\beta I(U;XY)+\gamma I(U;X)+\theta I(U;Y)\big]\\
&\leq
I(\tilde{X}_T;\tilde{Y}_T|T)+\Big[\beta I(\tilde{U}_T;\tilde{X}_T\tilde{Y}_T|T)+\gamma I(\tilde{U}_T;\tilde{X}_T|T)+\theta I(\tilde{U}_T;\tilde{Y}_T|T)\Big]\\&\quad+(\beta+\gamma+2\theta\color{black})R+f_1+(f_1+f_2)(1+\beta+\gamma+\theta)
\\&\leq \max_{t}\left(
I(\tilde{X}_T;\tilde{Y}_T|T=t)+\Big[\beta I(\tilde{U}_T;\tilde{X}_T\tilde{Y}_T|T=t)+\gamma I(\tilde{U}_T;\tilde{X}_T|T=t)+\theta I(\tilde{U}_T;\tilde{Y}_T|T=t)\Big]\right)\\&\quad+(\beta+\gamma+2\theta\color{black})R+f_1+(f_1+f_2)(1+\beta+\gamma+\theta)
\\&= \max_{t}\left(
I(\tilde{X}_t;\tilde{Y}_t)+\Big[\beta I(\tilde{U}_t;\tilde{X}_t\tilde{Y}_t)+\gamma I(\tilde{U}_t;\tilde{X}_t)+\theta I(\tilde{U}_t;\tilde{Y}_t)\Big]\right)\\&\quad+(\beta+\gamma+2\theta\color{black})R+f_1+(f_1+f_2)(1+\beta+\gamma+\theta)
\\&= \max_{t}\left(
I(\tilde{X}_t;\tilde{Y}_t)+\min_{\tilde{U}: \tilde{X}_t-\tilde{U}-\tilde{Y}_t}\Big[\beta I(\tilde{U};\tilde{X}_t\tilde{Y}_t)+\gamma I(\tilde{U};\tilde{X}_t)+\theta I(\tilde{U};\tilde{Y}_t)\Big]\right)\\&\quad+(\beta+\gamma+2\theta\color{black})R+f_1+(f_1+f_2)(1+\beta+\gamma+\theta)
\\&\leq \max_{p(\tilde{x})}\bigg[I(\tilde{X};\tilde{Y})+\min_{\tilde{U}: \tilde{X}-\tilde{U}-\tilde{Y}}\big[\beta I(\tilde{U};\tilde{X}\tilde{Y})+\gamma I(\tilde{U};\tilde{X})+\theta I(\tilde{U};\tilde{Y})\big]\bigg]\\&\quad +(\beta+\gamma+2\theta\color{black})R+f_1+(f_1+f_2)(1+\beta+\gamma+\theta).
\end{align*}

Recall that both $f_1$ and $f_2$ converge to zero as $\epsilon$ goes to zero. Furthermore, by~\eqref{eq:conversethm-eps-2} the joint pmf of $(X_T, Y_T)$ converges to the desired pmf $p(x)p(y|x)$ as $\epsilon$ converges to zero. Therefore, to complete the proof, it remains to show that the expression
$$I(X;Y)+\min_{U: X-U-Y}\big[\beta I(U;XY)+\gamma I(U;X)+\theta I(U;Y)\big]$$
is a continuous function of the joint distribution on $(X,Y)$. Equivalently, we need to show that the function
\begin{align*}
g(\epsilon)=\min_{\underset{\underset{q(xy)\overset{\epsilon}{\approx} p(xy)}{X-U-Y}}{q_{uxy}}}\big[\beta I(U;XY)+\gamma I(U;X)+\theta I(U;Y)\big],
\end{align*}
for a fixed $p(xy)$, satisfies $\lim_{\epsilon\rightarrow0}g(\epsilon)=g(0)$. 

First, observe that $g(\epsilon)$ is a decreasing function of $\epsilon$; in particular, $g(0)\geq g(\epsilon)$ for all $\epsilon>0$. 
Thus, the limit $\lim_{\epsilon\longrightarrow0}g(\epsilon)$ exists and is at most $g(0)$. Second, for every $\epsilon>0$, the minimization over $U$, can be restricted to random variables $U$ with cardinality bound  $|\mathcal U|\leq |\mathcal{X}\times\mathcal Y|$. Let $p_\epsilon(x,y,u)$ be an optimal point in the minimization. Since $p_\epsilon(x,y,u)$  belongs to the  compact set of the probability simplex on a finite alphabet set, the set of optimal points has a limit point ${p}^*(xyu)$. We then have
$$g(0)\geq \lim_{\epsilon\rightarrow0}g(\epsilon)=\beta I_{{p}^*}(U;XY)+\gamma I_{{p}^*}(U;X)+\theta I_{{p}^*}(U;Y). $$
Moreover, we have $p^*(xy)=p(xy)$, and by the continuity of mutual information, $X-U-Y$ holds for the limit distribution $p^*(xyu)$ as well. Now by definition we have
\begin{align*}
g(0)&=\underset{U: X-U-Y}{\min}\big[\beta I(U;XY)+\gamma I(U;X)+\theta I(U;Y)\big]\\
&\leq\beta I_{{p}^*}(U;XY)+\gamma I_{{p}^*}(U;X)+\theta I_{{p}^*}(U;Y)\\
&=\lim_{\epsilon\rightarrow0}g(\epsilon)\\
& \leq g(0).
\end{align*}
This completes the proof.
\end{proof}

In the above proof we used the following lemma from~\cite{CuffsT}. 

\begin{lemma}[Entropy and timing information of nearly i.i.d.\ sequences~\cite{CuffsT}]\label{cuflem}
For any discrete random variables $W^n$ whose pmf satisfies
\begin{align*}
\Big\|p(w^n)-\prod_{t=1}^{n}\hat{p}_{t}(w_{t})\Big\|_{1}<\epsilon<\frac{1}{4},
\end{align*}
for some $\hat{p}_{1}(w), \dots, \hat{p}_{n}(w)$, we have
\begin{align*}
\sum_{t=1}^{n}I(W_t;W^{t-1})\leq 4n\epsilon(\log |{\cal{W}}| + \log\frac{1}{\epsilon}).
\end{align*}
Moreover, for any random variable $T\in\{1,\cdots,n\}$ independent of $W^n$,
\begin{align*}
I(W_T;T)\leq 4n\epsilon(\log |{\cal{W}}| +\log\frac{1}{\epsilon}).
\end{align*}
\end{lemma}

\subsubsection{Equivalent characterization for symmetric channels} \label{symmetric-case}
\begin{proof}[Proof of Theorem \ref{converseThm2}]
We claim that, for a symmetric channel, the maximum on the right hand side of~\eqref{eqn:tm2} is achieved at the uniform distribution $p^{\mathsf u}(\tilde{x})$. We prove this for binary input channels, and the proof for general channels is done in a similar way. More specifically, we show that if $p=p(\tilde{X}=0)$ and we let 
$$g(p)=I(\tilde{X};\tilde{Y})+\min_{\tilde{U}: \tilde{X}-\tilde{U}-\tilde{Y}}\big[\beta I(\tilde{U};\tilde{X}\tilde{Y})+\gamma I(\tilde{U};\tilde{X})+\theta I(\tilde{U};\tilde{Y})\big],$$
then $g(p)$ is maximized at $p=\frac 12$. To show this, we first claim that $g(p)=g(1-p)$. Take some $p(\tilde{x} \tilde{y}\tilde{u})$ such that $p=p(\tilde{X}=0)$ and $\tilde{X}-\tilde{U}-\tilde{Y}$. Let
$$\tilde{X}'=1-\tilde{X}, \tilde{Y}'=\pi_Y(Y), \tilde{U}'=\tilde{U},$$
where $\pi_Y$ is the permutation corresponding to permutation $\pi_X(0)=1, \pi_X(1)=0$ such that 
$$p_{Y|X}(\pi_Y(y)|\pi_X(x))=p_{Y|X}(y|x).$$ 
Clearly, all the mutual information terms remain the same for $\tilde{X}', \tilde{U}', \tilde{Y}'$, and by the symmetry of $p(\tilde y|\tilde x)$, the two channels $\tilde X\rightarrow \tilde Y$ and $\tilde X'\rightarrow \tilde Y'$ are the same. On the other hand, $p(\tilde{X}_1=0)=1-p$. This means that, for every choice of $\tilde U$ in the minimization of $g(p)$ there is a choice of $\tilde U$ in minimization of $g(1-p)$ that leads to the same answer. As a result, $g(1-p)= g(p)$. 

Let $p(\tilde x\tilde y\tilde u)$ be the distribution with $\tilde X-\tilde U-\tilde Y$, that achieves the minimum in $g(1/2)$, i.e.,
$$g(1/2) =I(\tilde{X};\tilde{Y})+\beta I(\tilde{U};\tilde{X}\tilde{Y})+\gamma I(\tilde{U};\tilde{X})+\theta I(\tilde{U};\tilde{Y}).$$ 
Now, fix the channel $p(\tilde y \tilde u|\tilde x)$, and instead of the uniform distribution on $\tilde X$ put the distribution $\Bern(p)$ on $\tilde X$. Denote the resulting distribution by $q_p(\tilde x \tilde y \tilde u)$. Then by definition we have
\begin{align}\label{eq:g-p-01}
g(p)\leq I_{q_p}(\tilde{X};\tilde{Y})+\beta I_{q_p}(\tilde{U};\tilde{X}\tilde{Y})+\gamma I_{q_p}(\tilde{U};\tilde{X})+\theta I_{q_p}(\tilde{U};\tilde{Y}).
\end{align}
We similarly have
\begin{align}\label{eq:g-p-02}
g(1-p)\leq I_{q_{(1-p)}}(\tilde{X};\tilde{Y})+\beta I_{q_{(1-p)}}(\tilde{U};\tilde{X}\tilde{Y})+\gamma I_{q_{(1-p)}}(\tilde{U};\tilde{X})+\theta I_{q_{(1-p)}}(\tilde{U};\tilde{Y}).
\end{align}

Observe that for a fixed $p(\tilde{u},\tilde{y}|\tilde{x})$ the expression
$$I(\tilde{X};\tilde{Y})+\beta I(\tilde{U};\tilde{X}\tilde{Y})+\gamma I(\tilde{U};\tilde{X})+\theta I(\tilde{U};\tilde{Y}),$$
is a concave function of $p(\tilde{x})$; this is because mutual information is concave in input distribution for a fixed channel implying that the first, third and fourth term are concave; the third term is equal to $I(\tilde{U};\tilde{X}\tilde{Y})=I(\tilde{U};\tilde{X})+I(\tilde{U};\tilde{Y}|\tilde{X})$ which is a concave term plus a linear term. Therefore, by~\eqref{eq:g-p-01} and~\eqref{eq:g-p-02} and this concavity we obtain 
$$g(p)=\frac{1}{2}(g(p)+g(1-p))\leq g(\frac{1}{2}).$$ 
This proves our claim.

Now by Theorem~\ref{converseThm} and the above claim, for any non-negative real numbers $\beta$, $\gamma$, and $\theta$ we have
\begin{align}I(X;Y)+&\min_{U: X-U-Y}\big[\beta I(U;XY)+\gamma I(U;X)+\theta I(U;Y)\big]\leq
\nonumber\\&
I(\tilde{X};\tilde{Y})+\min_{\tilde{U}: \tilde{X}-\tilde{U}-\tilde{Y}}\big[\beta I(\tilde{U};\tilde{X}\tilde{Y})+\gamma I(\tilde{U};\tilde{X})+\theta I(\tilde{U};\tilde{Y})\big],\label{refeq}
\end{align}
in which $p(\tilde{x})$ is fixed to be the uniform distribution. Since this inequality holds for all $\beta, \gamma$ and $\theta$, 
we find that $I(\tilde{X};\tilde{Y})\geq I(X;Y)$ and further
\begin{align*}\min_{U: X-U-Y}&\big[\beta I(U;XY)+\gamma I(U;X)+\theta I(U;Y)\big]\leq
\\&
\min_{\tilde{U}: \tilde{X}-\tilde{U}-\tilde{Y}}\big[\beta I(\tilde{U};\tilde{X}\tilde{Y})+\gamma I(\tilde{U};\tilde{X})+\theta I(\tilde{U};\tilde{Y})\big].
\end{align*}
Then by the definition of $\mathcal S(p( y| x), p(x))$,
the supporting hyperplane theorem would imply statement of the theorem, i.e.,
$$\mathcal{S}(p(\tilde y|\tilde x), p(\tilde x))\subseteq \mathcal{S}(p(y|x), p(x)),$$
if we show that $\mathcal{S}(p(y|x), p(x))$ is a convex set. 

Here we prove that $\mathcal{S}(p(y|x), p(x))$ is convex.  
Corresponding to any two points in $\mathcal{S}(p(y|x), p(x))$, one can find two random variables $U_1, U_2$ such that $X-U_1-Y$ and $X-U_2-Y$ form Markov chains. Let $T$ be a uniform random variable on $\{1,2\}$ and independent of $X, Y, U_1$, and $U_2$. \color{black} Let $U=(T, U_T)$. We clearly have $I(X;Y|U)=0$ and further
$I(U;XY)=\frac 12 (I(U_1;XY)+I(U_2;XY))$ etc. Therefore, we can use $U$ to show that the average of the two original points belongs to $\mathcal{S}(p(y|x), p(x))$. 
\end{proof}

\subsection{Point-to-point channel: BEC vs BSC}
\begin{proof}[Proof of Theorem \ref{thm3}]

(i)  In this part, we would like to compute $\mathcal{S}(\BEC(\epsilon), p^{\mathsf u})$ with uniform input distribution. \color{black} Take some $p(u|xy)$ such that $X-U-Y$, and assume without loss of generality that $p(u)>0$ for all $u\in \mathcal U$. Define $U'$ as a function of $U$ as follows:
\begin{align*}
U'=\begin{cases}
0\qquad \text{ if $U=u$ and }p(X=0|U=u)=1,\\
1\qquad \text{ if $U=u$ and  }p(X=1|U=u)=1,\\
e\qquad \text{ if $U=u$ and  }p(X=1|U=u)>0, P(X=0|U=u)>0.
\end{cases}
\end{align*}
Then we claim that $X-U'-Y$ forms a Markov chain. Observe that $I(X;Y|U'=0)=I(X;Y|U'=1)=0$, since $X$ is deterministic if $U'=0$ or $U'=1$. Moreover, $U'=e$ implies that $Y=e$ is deterministic and hence $I(X;Y|U'=e)=0$; this is because if for instance $p(Y=0|U'=e)>0$, then
$$p(Y=0|X=1)\geq p(U'=e|X=1)p(Y=0|U'=e)>0,$$
which is a contradiction. Therefore, $X-U'-Y$ forms a Markov chain.

Since $U'$ is a function of $U$ we have
\begin{align*}
I(U';XY)&\leq I(U;XY),\\ 
I(U';X)&\leq I(U;X),\\ 
I(U';Y)&\leq I(U;Y).
\end{align*} 
Therefore, in the definition of $\mathcal S(\BEC(\epsilon), p^{\mathsf u})$ without loss of generality we may assume that $U$ has the form of $U'$ defined above. This form is depicted in Fig.~\ref{fig2news}. Here $a, b, c, d\in [1-\epsilon,1]$ are arbitrary numbers with $\epsilon=1-ac=1-bd$. With the latter equations $U$ is indeed determined by the pair $(a, b)$ since $c$ and $d$ are computed in terms of $a, b$ and $\epsilon$.

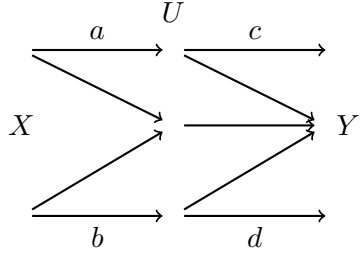
\begin{figure}
\begin{center}
\begin{tikzpicture}[scale=1., thick]
\node at (-2,1) (jointdist){$X$};
\node at (0,2.5){$U$};
\node at (0,2)  (commonrdn1){};
\node at (0,-0.2)  (commonrdn2){};
\node  at ( 0,1)  (ch) {};
\node at (2.3,1) (output){$Y$};
\node at ( 2,1) (dec) {};
\node at (-2,2) (enc1) {};
\node at (-2,-0.2) (enc2) {};
\draw [->] (enc1) --  (ch);
\draw [->] (enc2) -- (ch);
\draw [->] (ch) --(dec);
\draw [->]  (commonrdn1) -- node[above] {$c$}  (2,2) ;
\draw [->]  (enc1) -- node[above] {$a$}(commonrdn1);
\draw [->]  (commonrdn2) -- node[below] {$d$}  (2,-0.2) ;
\draw [->] (enc2) --node[below] {$b$} (commonrdn2);
\draw [->] (commonrdn1)--(dec);
\draw [->] (commonrdn2)--(dec);
\end{tikzpicture}
\end{center}
\caption{The form of $X\rightarrow U\rightarrow Y$ in the definition of $\mathcal{S}(\BEC(\epsilon), p^{\mathsf u})$.}
\label{fig2news}
\end{figure}

We claim that for the uniform input distribution $p(x)=p^{\mathsf u}(x)$ it suffices to consider the symmetric case with $a=b$ and  $c=d$. Observe that $p(x, y, u)$ is linear in terms of $a$ and $b$, e.g.,
\begin{align*}
p(X=0, Y=0, U=0) &=\frac{ac}{2}=\frac{1-\epsilon}{2},\\
P(X=0, Y=e, U=0) &= \frac{a(1-c)}{2} = \frac{a - ac}{2}=\frac{a-1+\epsilon}{2},\\
P(X=0, Y=e, U=e) &= \frac{1-a}{2}.
\end{align*}
On the other hand $I(U;XY)=H(XY) -H(XY|U)$ is a convex function when we linearly change the joint pmf of $p(x,u,y)$ while fixing $p(xy)$. Therefore, the value of $I(U; XY)$ at $(\frac{a+b}{2}, \frac{a+b}{2})$ is less than or equal to the average of its values  at $(a,b)$ and $(b,a)$. Moreover, by symmetry, the value of $I(U; XY)$ is the same at $(a,b)$ and $(b,a)$. We conclude that $I(U; XY)$ at $(\frac{a+b}{2}, \frac{a+b}{2})$ is not greater than this value at $(a,b)$. The same argument works for $I(U; X)$ and $I(U; Y)$ as well. 
Therefore, the three terms $I(U;XY), I(U;X)$ and $I(U;Y)$ are simultaneously minimized when $a=b$, and then $c=d$. 

Using the Markov chain condition $X- U- Y$ we have
\begin{align*}
I(U;XY)&=H(XY)-H(XY|U)\\
&=H(XY)-H(X|U)-H(Y|U)\\
&=H(XY)-H(X)-H(Y)+H(X)+H(Y)-H(X|U)-H(Y|U)\\
&=-I(X;Y)+I(X;U)+I(Y;U)
\end{align*}
Then, for the BEC channel with parameter $\epsilon$ we have 
$$I(U;XY)=-1+\epsilon +I(X;U)+I(Y;U).$$
Moreover, for $a=b\geq 1-\epsilon$ we have $H(X|U)=1-a$, and $H(Y|U)=ah(\frac{1-\epsilon}{a})$. Then
\begin{align*}
I(U;XY)&=h(\epsilon)+a-ah(\frac{1-\epsilon}{a})\\
I(U;X)&=a\\
I(U;Y)&=1-\epsilon+h(\epsilon)-ah(\frac{1-\epsilon}{a})
\end{align*}
The result then follows by a straightforward computation.

\vspace{.15in}
\noindent (ii)
We adapt the approach of Wyner to weighted sum calculations to prove our result.
Take the channel $\BSC(p)$ with uniform input distribution. Take some arbitrary auxiliary $U$ such that $X-U-Y$. We define two random variables as functions of $U$ by
$$A=p(X=0|U),\qquad B=p(Y=0|U).$$
Then we have
$$H(X|U)=\mathbb{E}[h(A)], \qquad H(Y|U)=\mathbb{E}[h(B)].$$
Furthermore,
$$p(X=0)=\mathbb{E}[A], \qquad p(Y=0)=\mathbb{E}[B].$$
Also 
$$p(X=0, Y=0|U)=p(X=0|U)p(Y=0|U)=AB,$$ and hence
$p(X=0, Y=0)=\mathbb{E}[AB].$
Therefore, we have
\begin{align}
I(U; XY)&= 1+h(p)-\mathbb{E}[h(A)]-\mathbb{E}[h(B)], \label{eq:u-xy-a-b-1}\\
I(U; X)&= 1-\mathbb{E}[h(A)],\\ 
I(U;Y)& =  1-\mathbb{E}[h(B)]. \label{eq:u-xy-a-b-3}
\end{align}
Here $A, B$ are real-valued random variables satisfying:
$$A, B\in [0,1],$$
$$\mathbb{E}[A]=\mathbb{E}[B]=\frac 12,$$
$$\mathbb{E}[AB]=\frac{1-p}{2}.$$


By the above equations to compute $\mathcal{S}(\BSC(\epsilon), p^{\mathsf u})$ we need to characterize the set 
$$\bigcup_{A, B}\big\{(b_1, b_2): b_1\leq \mathbb{E}[h(A)],\quad b_2\leq \mathbb{E}[h(B)]\big\},$$
where we take union over all  real-valued random variables $A, B$ satisfying the above constraints.
Equivalently, for any $\lambda\in [0,1]$ we need to compute
\begin{align}\label{eq:AB}
\max \lambda \E[h(A)] + \bar \lambda \E[h(B)],
\end{align}
over all $A, B$. We show that here the maximum occurs at two binary random variables $A$ and $B$ that correspond to $X\rightarrow U$ and $U\rightarrow Y$ being BSC channels. 

\begin{figure}[t]
\begin{center}
\includegraphics[scale=0.6]{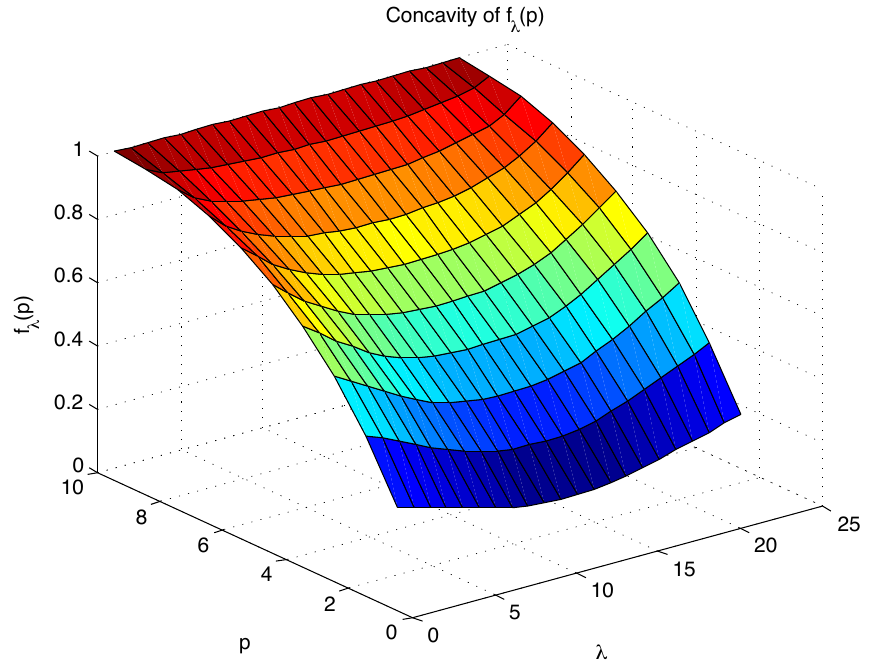}
\end{center}
\caption{Concavity of $f_\lambda(p)$ with respect to $0.05\leq p\leq 0.5$ and for all $0\leq\lambda\leq 1$\label{ConjPlot}.}
\end{figure}

Let $X\rightarrow U$ be a BSC with parameter $\alpha$, and let $U\rightarrow Y$ be another BSC with parameter $\beta$. We need the induced channel $X\rightarrow Y$ be a BSC with parameter $p\in [0,1/2]$. This is equivalent to
\begin{align}\label{eq:p}
p=\alpha*\beta=\alpha\bar \beta +\bar \alpha \beta.
\end{align}
For this special $U$ we get
\begin{align*}
\max_{\underset{\alpha*\beta=p}{\alpha,\beta}} \lambda H(X|U) +\bar \lambda H(Y|U)
  = &\max_{\underset{\alpha*\beta=p}{\alpha,\beta}} \lambda h(\alpha) + \bar \lambda h(\beta).
\end{align*}
We then make the following conjecture:
\begin{conjecture}\label{conjecture1}
Let $$f_\lambda(p)=\max_{\underset{\alpha*\beta=p}{\alpha, \beta}} \lambda h(\alpha) + \bar \lambda h(\beta).$$ Then $f_\lambda(p)$ is a concave function of $p$ for all $\lambda$, as plotted in Fig.~\ref{ConjPlot}. 
\end{conjecture}

Using this conjecture, we show that the answer to the maximization~\eqref{eq:AB} is also $f_\lambda(p)$ defined above. From the definitions it is clear that $f_\lambda(p)$ is a lower bound on~\eqref{eq:AB}. To prove inequality in the other direction, take $A,B$ with the above conditions. Assume that $(A, B) = (\alpha_i, \beta_j)$ happens with probability $q_{ij}$. We have
\begin{align*}
\lambda \E[h(A)] + \bar \lambda \E[h(B)] & = \sum_{i,j} q_{ij} [ \lambda h(\alpha_i) + \bar \lambda h(\beta_j)  ]\\
& \leq \sum_{i,j} q_{ij} f_\lambda(\alpha_i\bar\beta_j + \bar \alpha_i \beta_j)\\
& \leq f_\lambda\Big(   \sum_{i,j} q_{ij} (\alpha_i\bar\beta_j + \bar \alpha_i \beta_j)     \Big)\\
& = f_\lambda(\E[A] + \E[B] - 2\E[AB])\\
& = f_\lambda(p).
\end{align*}
Therefore, to compute~\eqref{eq:AB} we may restrict to auxiliary $U$ where $X\rightarrow U$ and $U\rightarrow Y$ are BSC channels with parameters $\alpha$ and $\beta$ respectively, with $p=\alpha* \beta$.  In this case, using equations~\eqref{eq:u-xy-a-b-1}-\eqref{eq:u-xy-a-b-3} we have
\begin{align*}
I(U;XY)&=1+h(p)-h(\alpha)-h(\beta),\\
I(U;X)&=1-h(\alpha),\\
I(U;Y)&=1-h(\beta).
\end{align*}
These give the desired result. 
\end{proof}

\subsection{Broadcast channel}
\begin{proof}[Proof of Theorem \ref{BCIn}]
The structure of the proof of this theorem is similar to that of Theorem~\ref{Pt2PtIn} and has three parts. 

\vspace{.15in}
\noindent
\textbf{Part (1)}: We define two protocols each of which induces a joint distribution on random variables that will be used in the proof.\\

\noindent
\emph{Protocol A [Not useful for coding].} Let $(W^{n},U^{n},V^{n},X^{n},\tilde{X}^{n},\tilde{Y}^n,\tilde{Z}^n,Y^{n},Z^{n})$ be $n$ i.i.d.\ repetitions of the joint pmf $p(w, u, v, x,\tilde{x}, \tilde{y}, \tilde{z}, y, z)$. Consider the following construction:
\begin{itemize}
\item To each sequence $w^{n}\in \mathcal W^n$ assign two random bin indices $g_{0}\in[1:2^{n\tilde{R}_{0}}]$ and $\omega\in[1:2^{nR}]$.

\item To each pair of sequences $(w^{n}, u^{n})$, assign a random bin index $g_{1}\in[1:2^{n\tilde{R}_{1}}]$.

\item To each pair of sequences $(w^{n}, v^{n})$, assign a random bin index $g_{2}\in[1:2^{n{\tilde{R}_{2}}}]$.

\item Consider two Slepian-Wolf decoders to estimate $(\hat{w}_1^{n}, \hat{u}^{n})$ and $(\hat{w}_2^{n}, \hat{v}^{n})$ from $(\omega, g_{1},\tilde{y}^{n})$ and $(\omega,g_{2},\tilde{z}^{n})$, respectively. Here we are considering $\tilde{y}^n$ and $\tilde{z}^n$ as side information, and $(\omega, g_1)$ and $(\omega, g_2)$ as random bins of the sources $({w}^{n}, {u}^{n})$ and $({w}^{n}, {v}^{n})$ that we want to decode. Note that $\hat{w}_1^{n}$ and $\hat{w}_2^{n}$ are reconstructions of $w^n$ by two different Slepian-Wolf decoders.
\end{itemize}
The constraints on the rates $R, \tilde R_0, \tilde R_1$ and $\tilde R_2$ for the success of the decoders will be imposed
later. The random pmf induced by the random binning, denoted by $Q$, can be expressed as follows:
\begin{align*}
Q(x^{n},y^{n},z^{n},w^n,&u^{n},v^{n},g_{[0:2]},\tilde{x}^n,\tilde{y}^n,\tilde{z}^n,\omega)\nonumber\\
&= p(x^{n})p(w^{n},u^{n},v^{n},\tilde{x}^n|x^{n})Q(g_{0},\omega |w^{n})Q(g_1|w^{n},u^{n})\nonumber\\
&\quad \times Q(g_2|w^{n},v^{n})p(\tilde{y}^n,\tilde{z}^n|\tilde{x}^n)Q^{\mathsf{sw}}(\hat{w}_1^{n},\hat{u}^{n}|g_{0}, g_1,\tilde{y}^n,\omega)\nonumber\\
&\quad \times Q^{\mathsf{sw}}(\hat{w}_2^{n},\hat{v}^{n}|g_{0},g_{2},\tilde{z}^n,\omega)p(y^{n}|w^n,u^{n},\tilde{y}^{n})p(z^{n}|w^n,v^{n},\tilde{z}^{n})\nonumber\\
&= p(x^{n})Q(w^{n},u^{n},v^{n},\tilde{x}^n,g_{[0:2]},\omega|x^{n})\nonumber\\
&\quad \times p(\tilde{y}^n,\tilde{z}^n|\tilde{x}^n)Q^{\mathsf{sw}}(\hat{w}_1^{n},\hat{u}^{n}|g_{0},g_1,\tilde{y}^n,\omega)\nonumber\\
&\quad \times Q^{\mathsf{sw}}(\hat{w}_2^{n},\hat{v}^{n}|g_{0},g_{2},\tilde{z}^n,\omega)p(y^{n}|w^n,u^{n},\tilde{y}^{n})p(z^{n}|w^n,v^{n},\tilde{z}^{n})\nonumber\\
&= p(x^{n})Q(g_{[0:2]},\omega |x^{n})Q(w^{n},u^{n},v^{n},\tilde{x}^n|x^n,g_{[0:2]},\omega)p(\tilde{y}^n,\tilde{z}^n|\tilde{x}^n)\nonumber\\
&\quad \times Q^{\mathsf{sw}}(\hat{w}_1^{n},\hat{u}^{n}|g_{0},g_1,\tilde{y}^n,\omega)Q^{\mathsf{sw}}(\hat{w}_2^{n},\hat{v}^{n}|g_{0},g_{2},\tilde{z}^n,\omega)\nonumber\\
&\quad \times p(y^{n}|w^n,u^{n},\tilde{y}^{n})p(z^{n}|w^n,v^{n},\tilde{
z}^{n}).\nonumber
\end{align*}

\vspace{.15in}
\noindent
\emph{Protocol B [Useful for coding after removing extra common randomnesses].} In this protocol we assume that the sender and receivers have access to the extra common randomness $(G_{0},G_{1},G_{2})$ where $G_{0},G_{1},G_{2}$ are mutually independent of $X^n$ and $\omega$. It is further assumed that $G_{0},G_{1}$ and $G_{2}$ are distributed uniformly over the sets $[1:2^{n\tilde{R}_{0}}]$, $[1:2^{n\tilde{R}_{1}}]$ and $[1:2^{n\tilde{R}_{2}}]$, respectively. Now we use the following protocol:
\begin{itemize}
 \item First, the sender having $(g_{[0:2]},\omega,x^{n})$ generates  $(w^{n},u^{n},v^{n},\tilde{x}^n)$ according to pmf $Q(w^{n},u^{n},v^{n},\tilde{x}^n|x^n,\\g_{[0:2]},\omega)$ of Protocol A, and sends $\tilde{x}^n$ over the memoryless broadcast channel $p(\tilde{y}^n, \tilde{z}^n|\tilde{x}^n)$. The first receiver gets $\tilde{y}^{n}$ and the second receiver gets $\tilde{z}^{n}$ from the channel. Having $(g_{0},g_1,\omega,\tilde{y}^{n})$, the first receiver uses the Slepian-Wolf decoder $Q^{\mathsf{sw}}(\hat{w}_1^{n}, \hat{u}^{n}|\omega,g_{0},g_1,\tilde{y}^{n})$ to estimate $(w^n, u^n)$. Similarly, the second receiver uses the Slepian-Wolf decoder $Q^{\mathsf{sw}}(\hat{w}_2^{n},\hat{v}^{n}|\omega,g_{0},g_{2},\tilde{z}^{n})$ to obtain an estimate of $(w^{n}, v^{n})$.
Here $\hat{w}_{1}^{n}$ and $\hat{w}_{2}^{n}$ are first and second receiver's estimate of $w^n$ respectively. 
 
 \item Having $(\tilde{y}^n,\hat{u}^n,\hat{w}_{1}^{n})$, the first receiver generates $y^{n}$ using $p(y^{n}|\tilde{y}^{n},\hat{u}^{n},\hat{w}_{1}^{n})=\prod_{i=1}^{n}p(y_i|\tilde{y}_i,\hat{u}_i,\hat{w}_{1i})$. Similarly the second receiver generates $z^{n}$  according to $ p(z^{n}|\tilde{z}^{n},\hat{v}^{n},\hat{w}_2^n)=\prod_{i=1}^{n}p(z_i|\tilde{z}_i,\hat{v}_i,\hat{w}_{2i})$.
\end{itemize}

The random pmf induced by the second protocol, denoted by $\hat{P}$, is equal to 
\begin{align*}
\hat{Q}(x^{n},y^{n},z^{n},w^n,u^{n},&v^{n},\hat{w}_{[1:2]}^n,\hat{u}^{n},\hat{v}^{n},g_{[0:2]},\tilde{x}^n,\tilde{y}^n,\tilde{z}^n,\omega)
\\&=p^{\mathsf u}(\omega)p^{\mathsf u}(g_{[0:2]})p(x^{n})Q(w^{n},u^{n},v^{n},\tilde{x}^n|x^n,g_{[0:2]},\omega)\\
& \quad \times p(\tilde{y}^n,\tilde{z}^n|\tilde{x}^n)
Q^{\mathsf{sw}}(\hat{w}_{1}^{n},\hat{u}^{n}|\omega,g_{[0:1]},\tilde{y}^{n})\\ 
& \quad \times Q^{\mathsf{sw}}(\hat{w}_{2}^{n},\hat{v}^{n}|\omega ,g_{0},g_2,\tilde{z}^{n})
 p(y^{n}|\tilde{y}^{n},\hat{u}^{n},\hat{w}_{1}^{n})p(z^{n}|\tilde{z}^{n},\hat{v}^{n},\hat{w}_{2}^{n}).
\end{align*}

\vspace{.15in}
\noindent
\textbf{Part (2)}: In this part we mention sufficient conditions under which the pmf's $Q$ and $\hat{Q}$ induced by the above protocols are approximately equal. The first step is to observe that $g_0, \omega, g_1$ and $g_2$ are bin indices of $w^n$, $w^n$, $w^nu^n$ and $w^nv^n$, respectively. Substituting $T=4$, $X_1\leftarrow W, X_2\leftarrow W, X_3\leftarrow WU$, $X_4 \leftarrow WV$ and $Z\leftarrow \emptyset$ in Theorem~1 of~\cite{OSRB}, we find that if
\begin{align}
R+\tilde{R}_{0}&<H(W|X),\nonumber\\
R+\tilde{R}_{0}+\tilde{R}_{1}&<H(WU|X),\nonumber\\
R+\tilde{R}_{0}+\tilde{R}_{2}&<H(WV|X),\nonumber\\
R+\tilde{R}_{0}+\tilde{R}_{1}+\tilde{R}_{2}&<H(WUV|X),\label{eq:bcc1}
\end{align}
then there exists $\epsilon^{(n)}_{0}\rightarrow 0$ such that $Q(g_{[0:2]},\omega|x^{n})\overset{\epsilon^{(n)}_{0}}{\approx}p^{\mathsf u}(\omega)p^{\mathsf u}(g_{[0:2]})=\hat{Q}(g_{[0:2]},\omega)$. This implies that 
\begin{align}
\hat{Q}(x^{n},w^n,u^n,v^n,\hat{w}^{n}_{1},\hat{u}^n,\hat{w}^{n}_{2},\hat{v}^n,g_{[0:2]},\tilde{x}^n,\tilde{y}^n,\tilde{z}^n,\omega)\overset{\epsilon^{(n)}_{0}}{\approx}Q(x^{n},w^{n},u^n,v^n,\hat{u}^n,\hat{w}^{n}_{1},\hat{v}^n,\hat{w}^{n}_{2},g_{[0:2]},\tilde{x}^n,\tilde{y}^n,\tilde{z}^n,\omega).\label{eq:bcc2}
\end{align}
Note that we have not yet included $y^n$ and $z^n$ in the above pmf's.

The next step is to find the conditions under which the Slepian-Wolf decoders of Protocol A work well with high probability. By the Slepian-Wolf theorem we need
\begin{align}
\tilde{R}_{0}+\tilde{R}_{1}+R&>H(WU|\tilde{Y}),\nonumber\\
\quad \tilde{R}_{1}&>H(U|W\tilde{Y}),\nonumber\\
\tilde{R}_{0}+\tilde{R}_{2}+R&>H(WV|\tilde{Z}),\nonumber\\
\quad \tilde{R}_{2}&>H(V|W\tilde{Z}).\label{eq:bcc3}
\end{align}
Then for an asymptotically vanishing sequence $\epsilon^{(n)}_{1}$, we have
\begin{align}
Q(x^{n},&w^n,u^{n},v^{n},g_{[0:2]},\tilde{x}^n,\tilde{y}^n,\tilde{z}^n,\omega ,\hat{w}^{n}_{1} ,\hat{u}^n ,\hat{w}^{n}_{2}, \hat{v}^n)\nonumber\\
&\quad \overset{\epsilon^{(n)}_{1}}{\approx}Q(x^{n},w^n,u^{n},v^{n},g_{[1:2]},\tilde{x}^n,\tilde{y}^n,\tilde{z}^n,\omega)\boldsymbol{1}\{\hat{w}^{n}_{1}=\hat w^n_2={w}^{n},\hat{u}^n=u^{n},\hat{v}^{n}=v^{n}\}\label{eq:bcc4}.
\end{align}
Using~\eqref{eq:bcc2} and~\eqref{eq:bcc4} and Lemma~3 of~\cite{OSRB} we have
\begin{align}
\hat{Q}(x^{n},&w^n,u^n,\hat{w}^{n}_{1},v^n,\hat{u}^n,\hat{w}^{n}_{2},\hat{v}^n,g_{[0:2]},\tilde{x}^n,\tilde{y}^n,\tilde{z}^n,\omega)\nonumber\\
&\quad\overset{\epsilon ^{(n)}_{0}+\epsilon^{(n)}_{1}}{\approx}Q(x^{n},u^{n},v^{n},w^n,g_{[0:2]},\tilde{x}^n,\tilde{y}^n,\tilde{z}^n,\omega)\boldsymbol{1}\{\hat{w}^{n}_{1}=\hat w^n_2=w^n,\hat{u}^n=u^{n},\hat{v}^n=v^{n}\}.
\end{align}
Moreover, the third part of Lemma~3 of~\cite{OSRB} implies that
\begin{align}
\hat{Q}(x^{n},w^n,&u^n,v^n,\hat{w}^{n}_{1},\hat{w}^{n}_{2},\hat{u}^n,\hat{v}^n,g_{[0:2]},\tilde{x}^n,\tilde{y}^n,\tilde{z}^n,\omega)p(z^n|\tilde{z}^n,\hat{v}^{n}, \hat{w}_2^n)p(y^n|\tilde{y}^n,\hat{u}^{n}, \hat{w}_1^n)\nonumber\\
&\overset{\epsilon ^{(n)}_{0}+\epsilon^{(n)}_{1}}{\approx}
Q(x^{n},u^{n},w^n,v^{n},g_{[0:2]},\tilde{x}^n,\tilde{y}^n,\tilde{z}^n,\omega)\boldsymbol{1}\{\hat{w}^{n}_{1}=\hat{w}^{n}_{2}=w^n,\hat{u}^n=u^{n},\hat{v}^n=v^{n}\}\nonumber\\
&\qquad\qquad \times p(z^n|\tilde{z}^n,\hat{w}^{n}_{2},\hat{v}^{n})p(y^n|\tilde{y}^n,\hat{w}^{n}_{1},\hat{u}^{n})\nonumber\\
&\quad =Q(x^{n},u^{n},w^n,v^{n},g_{[0:2]},\tilde{x}^n,\tilde{y}^n,\tilde{z}^n,\omega)\boldsymbol{1}\{\hat{w}^{n}_{1}=\hat{w}^{n}_{2}=w^n,\hat{u}^n=u^{n},\hat{v}^n=v^{n}\}\nonumber\\
&\qquad \qquad \times p(z^n|\tilde{z}^n{w}^{n}{v}^{n})p(y^n|\tilde{y}^n{w}^{n}{u}^{n}).
\end{align}
Therefore, 
\begin{align}
\hat{Q}&(x^{n},y^{n},z^n,w^n,u^n,\hat{w}^{n}_{1},v^n,\hat{w}^{n}_{2},\hat{u}^n,\hat{v}^n,g_{[0:2]},\tilde{x}^n,\tilde{y}^n,\tilde{z}^n,\omega)\overset{\epsilon ^{(n)}_{0}+\epsilon^{(n)}_{1}}{\approx}\nonumber\\
&\qquad Q(x^{n},y^{n},z^n,w^n,u^{n},v^{n},g_{[0:2]},\tilde{x}^n,\tilde{y}^n,\tilde{z}^n,\omega)\boldsymbol{1}\{\hat{w}^{n}_{1}=\hat{w}^{n}_{2}=w^n,\hat{u}^n=u^{n},\hat{v}^n=v^{n}\}.
\end{align}
Finally, using the second item in part~1 of Lemma~3 of~\cite{OSRB} we conclude that
\begin{align}
\hat{Q}(g_{[0:2]},x^n,y^n,z^n)\overset{\epsilon ^{(n)}_{0}+\epsilon^{(n)}_{1}}{\approx}Q(g_{[0:2]},x^n,y^n,z^n).
\end{align}
In particular, the marginal pmf of $(X^n,Y^n,Z^n)$ of the right hand side of this expression is equal to $p(x^n,y^n,z^n)$, which is the desired distribution.

\vspace{.15in}
\noindent
\textbf{Part(3)}:
In the above protocol we assumed that the sender and receivers have access to external randomnesses $G_{[0:2]}$
which are not present in the model. To eliminate these extra common randomnesses, we will fix particular instances $g_{[0:2]}$ of $G_{[0:2]}$ and show that the same protocol works even if we fix $G_{[0:2]}=g_{[0:2]}$. To prove this note that by letting $G_{[0:2]}=g_{[0:2]}$, the induced pmf $\hat{Q}(x^n,y^n,z^n)$ changes to the conditional pmf $\hat{Q}(x^n,y^n,z^n|g_{[0:2]})$. But if $G_{[0:2]}$ is almost independent of $(X^n,Y^n,Z^n)$, the conditional pmf $\hat{Q}(x^n,y^n,z^n|g_{[0:2]})$ would be close to the desired distribution as well. To
obtain the independence, we again use Theorem~1 of~\cite{OSRB}. Substituting $T = 3$, $X_1\leftarrow W$, $X_2\leftarrow WU$, $X_3 \leftarrow WV$, and
$Z \leftarrow  XYZ$ in Theorem~1 of~\cite{OSRB}, we find that if
\begin{align}
\tilde{R}_{0}&<H(W|XYZ),\nonumber\\
\tilde{R}_{0}+\tilde{R}_{1}&<H(WU|XYZ),\nonumber\\
\tilde{R}_{0}+\tilde{R}_{2}&<H(WV|XYZ),\nonumber\\
\tilde{R}_{0}+\tilde{R}_{1}+\tilde{R}_{2}&<H(WUV|XYZ),\label{eq:bcc5}
\end{align}
then $Q(x^n,y^n,z^n,g_{[0:2]})\overset{\epsilon^{(n)}_{2}}{\approx}p^{\mathsf u}(g_{[0:2]})p(x^n,y^n,z^n)$, for some asymptotically vanishing $\epsilon ^{(n)}_{2}$. Thus, by triangular inequality for total variation, we have $\hat{Q}(x^n,y^n,z^n,g_{[0:2]})\overset{\epsilon ^{(n)}}{\approx}p^{U}(g_{[0:2]})p(x^n,y^n,z^n)$, where $\epsilon^{(n)}=\sum_{i=0}^{2}\epsilon^{(n)}_{i}$. From the definition of total variation distance random pmf's, the average of the total variation distance between $\hat{q}(x^n,y^n,z^n,g_{[0:2]})$ and $p^{\mathsf u}(g_{[0:2]})p(x^n,y^n,z^n)$ over all random binning is small. Thus, there exists a fixed binning with the corresponding pmf $\hat{q}$ such that $\hat{q}(x^n,y^n,z^n,g_{[0:2]})\overset{\epsilon ^{(n)}}{\approx}p^{\mathsf u}(g_{[0:2]})p(x^n,y^n,z^n)$. Next, Lemma~3 of~\cite{OSRB} guarantees the existance of an instance $g_{[0:2]}$ such that $$\hat{q}(x^n,y^n,z^n|g_{[0:2]})\overset{2\epsilon ^{(n)}}{\approx}p(x^n,y^n,z^n).$$ 
Then the extra shared randomness $G_{[0:2]}$ can be eliminated by fixing it to be $G_{[0:2]}=g_{[0:2]}$.

Finally, observe that the rate region in the theorem is seen to be equivalent to that given by equations~\eqref{eq:bcc1}, \eqref{eq:bcc3} and~\eqref{eq:bcc5} after eliminating $\tilde{R}_0,\tilde{R}_1,\tilde{R}_2$ using Fourier-Motzkin elimination. 


\end{proof}

\subsection{An infeasibility result for exact channel simulation}
\begin{proof}[Proof of Theorem \ref{exact-converse}]
Let $\Phi$ be a function that takes in an arbitrary discrete channel and returns a real number. Assume that $\Phi(p(y|x))$ that satisfies additivity and data processing properties, namely,
$$\Phi\big(\prod_{i=1}^np(y_i|x_i)\big)=n \Phi\big(p(y|x)\big)$$
and 
$$\Phi(p(y|x))\geq \Phi(p(b|a)), ~~~ \text{if } ~~~p(b|a)=\sum_{x,y}p(b|y)p(y|x)p(x|a) \text{ for some } p(x|a) \emph{ and } p(b|y).$$
We claim that if $p(y|x)$ can be exactly simulated from $p(\tilde{y}|\tilde{x})$ with no shared randomness, we have $\Phi(p(y|x))\geq \Phi(p(\tilde y|\tilde x))$. To show this assume that 
 there is $n$ and encoding and decoding maps which result in a joint pmf $q(x^n, y^n, \tilde x^n, \tilde y^n)$ such that~\eqref{eq:n-sim-exact-4} holds. 
We then have
\begin{align}n\Phi(p(y|x))&=\Phi(p(y^n|x^n))\label{eqn:alpha1}\\&
\geq \Phi(p(\tilde y^n|\tilde x^n))\label{eqn:alpha2}\\&=n\Phi(p(\tilde y|\tilde x))\label{eqn:alpha3},\end{align}
where~\eqref{eqn:alpha1} and~\eqref{eqn:alpha3} follow from the additivity of $\Phi$ for product channels, and~\eqref{eqn:alpha2} follows from the data processing property of $\Phi$. 

Next, assume that  $\Phi(p(y|x))$ is also quasi-convex in $p(y|x)$. We claim that if $p(y|x)$ can be exactly simulated from $p(\tilde{y}|\tilde{x})$ with infinite shared randomness, we have $\Phi(p(y|x))\geq \Phi(p(\tilde y|\tilde x))$. Assume that there is $n$ and encoding and decoding maps which result in a joint pmf $q(x^n, y^n, \tilde x^n, \tilde y^n, \omega)$ such that~\eqref{eq:n-sim-exact-4} holds. Since $p(y^n|x^n)=\sum_\omega p(y^n|x^n,\omega)$, by quasi-convexity of $\Phi$, 
there is some choice for $\omega=\omega^*$ such that $$\Phi(p(y^n|x^n))\geq \Phi(p(y^n|x^n,\omega^*)).$$
Then, using the fact that $p(\tilde y^n|\tilde x^n,\omega^*)=p(\tilde y^n|\tilde x^n)$,
we can similarly write
\begin{align}n\Phi(p(y|x))=\Phi(p(y^n|x^n))\geq \Phi(p(y^n|x^n,\omega^*))
\geq \Phi(p(\tilde y^n|\tilde x^n))=n\Phi(p(\tilde y|\tilde x)).\end{align}
\color{black}

Note that $\Phi(p(y|x))=C_\alpha(p(y|x))$ satisfies the additivity by Theorem~\ref{thm:prodal}, data processing  by Theorem~\ref{thm:dataal} and quasi-convexity by Lemma~\ref{lemma:rev-added}.  This concludes the proof for capacity of order $\alpha$. 

It remains to show that $\Phi(p(y|x))=\mathsf{Diam}_\alpha(p(y|x))$ satisfies the additivity, data processing and quasi-convexity properties:

\noindent\emph{Data processing:} If $p(z|x)=\sum_{y}p(y|x)p(z|y)$, then by the data processing property of $\alpha$-R\'enyi divergence we have
$$\mathsf{Diam}_\alpha(p(y|x))=\max_{p(x), q(x)}D_{\alpha}(p(y)\|q(y)) \geq \max_{p(x), q(x)}D_{\alpha}(p(z)\|q(z))=\mathsf{Diam}_\alpha(p(z|x)).$$
If $p(y|w)=\sum_x p(x|w)p(y|x)$, then any $p(w)$ and $q(w)$ correspond to some $p(x)$ and $q(x)$. Therefore, 
$$\mathsf{Diam}_\alpha(p(y|w))=\max_{p(w), q(w)}D_{\alpha}(p(y)\|q(y)) \leq \max_{p(x), q(x)}D_{\alpha}(p(y)\|q(y))=\mathsf{Diam}_\alpha(p(y|x)).$$

\noindent
\emph{Additivity:} First, observe that by the quasi-convexity of $D_{\alpha}$ in its arguments (see e.g., \cite{Ervin}), we have
\begin{align}\mathsf{Diam}_\alpha(p(y|x))&=\max_{p(x), q(x)}D_{\alpha}(p(y)\|q(y))\nonumber
\\&=\max_{x_1,x_2}D_{\alpha}\big(p(y|x_1)\big\|p(y|x_2)\big)\label{eqn:diam2}.\end{align}
Therefore, 
\begin{align}
\mathsf{Diam}_\alpha(p(y^n|x^n))&=\max_{x_1^n,x_2^{n}}D_{\alpha}\big(p(y^n|x_1^n)\big\|p(y^n|x_2^{n})\big)\nonumber
\\&=\max_{x_1^n,x_2^{n}}\sum_{i=1}^n D_{\alpha}\big(p(y_i|x_{1i})\big\|p(y_i|x_{2i})\big)\label{eqn:dexpl}
\\&=n\cdot\max_{x_1,x_2}D_{\alpha}\big(p(y|x_1)\big\|p(y|x_2)\big)\nonumber
\\&=n\cdot\mathsf{Diam}_\alpha(p(y|x)),\nonumber\end{align}
where in~\eqref{eqn:dexpl} we use the fact that $p(y^n|x_1^n)=\prod_{i=1}^np(y_i|x_{1i})$ and similarly for $p(y^n|x_2^n)$.

\noindent
\emph{Quasi-convexity:}  This follows from  the quasi-convexity of $D_{\alpha}$ in its arguments (see e.g., \cite{Ervin}). If $p(y|x)=\sum_wp(y|x,w)p(w)$, then for every $x_1, x_2$, we have
$$D_{\alpha}\big(p(y|x_1)\big\|p(y|x_2)\big)\leq \max_w D_{\alpha}\big(p(y|x_1,w)\big\|p(y|x_2,w)\big).$$
Hence, characterization of channel diameter in \eqref{eqn:diam2}, we have
$$\max_{x_1,x_2}D_{\alpha}\big(p(y|x_1)\big\|p(y|x_2)\big)\leq 
\max_w\max_{x_1,x_2}D_{\alpha}\big(p(y|x_1,w)\big\|p(y|x_2,w)\big).$$\color{black}
\end{proof}

\subsection{Exact simulation of a BSC channel from a BEC channel}
\begin{proof}[Proof of Theorem \ref{BSCCAP}]
By Theorem~\ref{exact-converse}, $\BSC(p)$ can be exactly simulated from $\BEC(\epsilon)$ with infinite shared randomness only if
$$C_\infty(\BEC(\epsilon))\geq C_{\infty}(\BSC(p)).$$
Using equation~\eqref{RenyMut}, it is easy to verify that $C_\infty(\BEC(\epsilon))=\log(2-\epsilon)$ and 
\begin{align*}
C_{\infty}(\BSC(p))=\log(2\max\{p,\bar{p}\}).
\end{align*}
Thus, we should have $2\max(p,\bar{p})\leq 2-\epsilon$. Since $p\in[0,1/2]$, we get $2(1-p)\leq 2-\epsilon$, or 
$p\geq\epsilon/2$. On the other hand, a degradation strategy shows that any $p\geq\epsilon/2$ is achievable (without any need for shared randomness). This completes the proof.
\end{proof}
\subsection{Exact simulation of a BIBO channel from a BIBO channel}

\begin{figure}
\begin{center}
\includegraphics[scale=0.5]{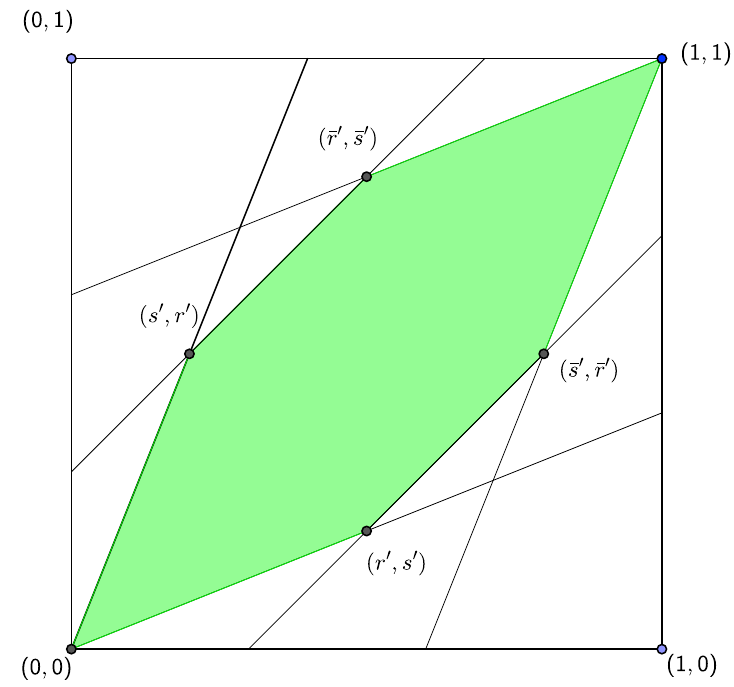}
\caption{The boundary of the simulation region for the exact channel simulation of BIBO  channel from another BIBO channel.}
\label{Capregion}
\end{center}
\end{figure}

\begin{proof}[Proof of Theorem \ref{BSCCAP2}]

\emph{Acheivability:} We first show that  any point  $(r,s)$ inside any of the two parallelograms is achievable. The parallelogram with vertices $\{(r,s), (\bar{r},\bar{s}), (0,0), (1,1)\}$ is achievable as follows: fix $\tilde{X}=X$. Then there are decoder strategies for achieving any of these four vertices of the parallelogram if we use $Y=\tilde{Y}, Y=1-\tilde{Y}, Y=0$ and $Y=1$. The whole parallelogram is achievable by time-sharing between these vertices using \emph{private} randomness at the decoder. The parallelogram with vertices $\{(s,r), (\bar{s},\bar{r}), (0,0), (1,1)\}$ is achievable in a similar way if we fix $\tilde{X}=1-X$ instead.

Thus, if shared randomness is not available, the union of the two parallelograms is achievable. If shared randomness is available, the convex hull of the region, which is the convex polygon is achievable.\color{black}\\

\noindent
\emph{Converse:} It suffices to prove the converse for the case of infinite shared randomness. It is clear that the simulation region when there is no shared randomness cannot exceed that when shared randomness exists. \color{black}

By~\eqref{RenyMut} for a BIBO channel $p(y|x)$ with parameters $(r, s)$ we have
 \begin{align*}
 C_{\infty}(p(y|x))&=\log(\max\{r,s\}+\max\{\bar{r}, \bar{s}\})
\\ &= \log\left(\max\{r+\bar{s},s+\bar{r}\}\right).
 \end{align*}
Thus, by Theorem~\ref{exact-converse}, the possibility of simulation gives 
$$\max\{r+\bar{s},s+\bar{r}\}\leq \max\{r'+\bar{s}',s'+\bar{r}'\},$$
or
$\max\{r-s,s-r\}\leq \max\{r'-s',s'-r'\}$. Equivalently, we have 
\begin{align}
|r-s|\leq |r'-s'|.
\label{Outin1}
\end{align}

Similarly, using~\eqref{eqn:diam2}, for such a channel $p(y|x)$ we have
 \begin{align*}\mathsf{Diam}_\infty(p(y|x))&=\max_{x_1, x_2}D_{\infty}(p(y|x_1)\|q(y|x_2))
\\&=\max\big\{D_\infty((r,\bar{r})\|(s,\bar{s})) , D_\infty((s,\bar{s})\|(r,\bar{r})) \big\}
\\&=\log\max\Big\{\frac{r}{s}, \frac{\bar{r}}{\bar{s}}, \frac{s}{r}, \frac{\bar{s}}{\bar{r}}\Big\}.
 \end{align*}
Therefore, again by Theorem~\ref{exact-converse}, the possibility of channel simulation gives 
\begin{align}
     \max\Big\{\frac{r}{s},\frac{s}{r},\frac{\bar{r}}{\bar{s}},\frac{\bar{s}}{\bar{r}} \Big\}&\leq \max\Big\{\frac{r'}{s'},\frac{s'}{r'},\frac{\bar{r'}}{\bar{s'}},\frac{\bar{s'}}{\bar{r'}} \Big\}.\label{Outin2}
     \end{align}
Equations (\ref{Outin1}) and (\ref{Outin2}) imply that $(r,s)$ is in the area depicted in Fig.~\ref{Capregion} for the given pair $(r',s')$. In particular, equation (\ref{Outin1}) gives the two edges that are parallel to the line $r=s$, and equation  (\ref{Outin2}) gives the four side boundaries of the region (see Fig.~\ref{Capregion}). This completes the proof.

\end{proof}

\section*{Acknowledgement}
The authors would like to thank the anonymous reviewer for valuable comments and suggestions to improve the quality of the paper.

\end{document}